\documentclass[a4paper,12pt]{article}
\usepackage[a4paper, total={6.5in, 8.5in}]{geometry}
\usepackage[utf8]{inputenc}
\usepackage[english]{babel}
\usepackage{a4wide}
\usepackage{booktabs}
\usepackage{amssymb}
\usepackage{float}
\usepackage{graphicx}
\usepackage{tikz}
\usepackage{xcolor}
\usepackage{amsthm}
\usepackage{amsmath}
\usepackage{mathrsfs}
\usetikzlibrary{positioning}
\usetikzlibrary{graphs,graphs.standard}
\usepackage{comment}

\usepackage[textwidth=30mm]{todonotes}
\newcommand{\AP}[1]{\todo[color=white, author=\textbf{Arkadi},inline]{\small #1\\}}

\newcommand{\pr}{\mathbb{P}}

\usepackage{subcaption}

\theoremstyle{theorem}
\newtheorem{thm}{Theorem}[section]
\newtheorem{cor}[thm]{Corollary}
\newtheorem{lem}[thm]{Lemma}

\newtheorem{exmp}[thm]{Example}
\newtheorem{defn}[thm]{Definition}

\numberwithin{equation}{section}

\newcommand{\N}{\mathbb{N}}

\title{The Expected Number of Pairwise Stable Networks\thanks{We would like to thank Philippe Bich, Marcus Pivato, Marco Scarsini and Xavier Venel for helpful discussions. Excellent research assistance by Deniz Sava\c{s} is gratefully acknowledged.}}

\author{P. Jean--Jacques Herings\thanks{Tilburg University, E-Mail: \texttt{P.J.J.Herings@tilburguniversity.edu}} \and Arkadi Predtetchinski\thanks{Maastricht University, E-Mail: \texttt{a.predtetchinski@maastrichtuniversity.nl}} \and Christian Seel \thanks{Maastricht University, E-Mail: \texttt{c.seel@maastrichtuniversity.nl}}}

\begin{document}

\maketitle

\begin{abstract}
\noindent This paper studies probabilistic properties of pairwise stability for a network model where individual utilities are random variables. We study the probability that a given network is pairwise stable and the expected number of pairwise stable networks. We provide a closed-form solution for the latter number.

As the evaluation of the exact expression is computationally challenging for large populations, we provide tractable lower and upper bounds for this expression which allow us to pin down the asymptotic behavior of the expected number of pairwise stable networks up to a multiplicative constant. This asymptotic behavior is described by the number of networks $ 2^{n(n-1)/2} $ times $ (2/n+1)^{n} $, a sequence that tends to infinity fast. We normalize the number of pairwise stable networks by this sequence and show that the variance of the normalized number of pairwise stable networks converges to zero as $ n $ tends to infinity.

We conclude that almost surely the number of pairwise stable networks tends to infinity, while the fraction of pairwise stable networks tends to $ 0 $ as $ n $ goes to infinity.
\end{abstract}
\newpage

\section{Introduction}
Pairwise stability as introduced in Jackson and Wolinsky (1996) is the most influential solution concept for economic network formation. A network is pairwise stable if no individual benefits from deleting a link and no pair of individuals benefits by creating a link between them. On the one hand, pairwise stability is often seen a weak/permissive solution concept (Jackson and van den Nouweland, 2005). On the other hand, it is easy to provide examples where no pairwise stable network exists and thus the literature developed sufficient conditions for existence; see e.g., Jackson and Watts (2001) and Hellmann (2013). In this work, we study pairwise stability from a probabilistic point of view. Our approach allows us to shed new light on the issue of (non-)existence and the expected number of pairwise stable networks.

As in Jackson and Wolinsky (1996), a network consists of bilateral links in a group of $n$ individuals. Each individual's utility for any given network is drawn i.i.d. from an atomless distribution. We derive two groups of results. The first group of results focuses on the probability that a given network is pairwise stable and on the expected number of pairwise stable networks. The second group of results studies the probability that two given networks are both pairwise stable, as well as the variance of the number of pairwise stable networks.

We start out by looking at the probability for a given network to be pairwise stable. In particular, we show that the probability that a network is pairwise stable decreases when a link is added. In other words, the probability for a network to be pairwise stable is monotone with respect to set inclusion, with the empty network having the highest probability to be pairwise stable and the complete network the lowest.

In the next step, we derive an exact expression for the expected number of pairwise stable networks which relies on the so-called seniority degrees. Relative to an individual $i$, the individuals with a lower index are called $i$'s juniors, and those with a higher index are called $i$'s seniors. The seniority degree of individual $i$ in a network is calculated by awarding $i$ one point for every link with $i$'s seniors, one point for every missing link with $i$'s juniors, and one point for free. The seniority degree of the network is then obtained as the product of all the individual seniority degrees. The expected number of pairwise stable networks is now equal to the sum over all graphs of the inverse of the seniority degrees of the graphs.

Evaluating the exact expression for the expected number of pairwise stable networks becomes computationally challenging already for populations of modest size. When $n=7$, the seniority degree has to be calculated for a total of $2^{28}$ networks. Thus, we proceed to derive tractable lower and upper bounds for the exact expression.

To obtain a lower bound on the expected number of pairwise stable networks, we apply a logarithmic transformation to the product of the individual seniority degrees, use Jensen's inequality, and thus reduce the problem to calculating the expectation of the logarithm of a Bernoulli random variable.

To derive the upper bound we obtain another expression for the expected number of pairwise stable networks similar to that found in Knuth (1976) for the expected number of stable matchings. It involves an iterated integral, where the integrand is dominated by a particular product of independent random variables, leading to an upper bound.

These bounds reveal that the expected number of pairwise stable networks is of the order of
\[
c_{n} = 2^{\tfrac{n(n-1)}{2}}(\tfrac{2}{n+1})^{n},
\]
where the first term in the definition of $c_{n}$ is equal to the cardinality of the set of all networks with $n$ agents. The sequence $ c_{n} $ tends to infinity fast when $ n $ does.

If we divide the number of pairwise stable networks by $ c_{n}, $ we obtain the normalized number of pairwise stable networks. We show that the normalized expected number of pairwise stable networks is asymptotically bounded below by $\sqrt{e}$ and above by $e$.

We now turn to the second group of results. We relate the probability for a pair of networks to be simultaneously pairwise stable to the Hamming distance between these networks. If the Hamming distance between two networks is 1, they cannot be pairwise stable at the same time with positive probability. If the Hamming distance between two networks is 2, then the two events that they are pairwise stable are positively dependent, while if the Hamming distance is 3 or greater, then the events that the respective networks are pairwise stable are independent.

As $n$ gets large, the fraction of pairs of networks at a Hamming distance of at least 3 goes to one. Thus, as $n$ gets large, the variance of the normalized number of pairwise stable networks goes to zero. As an implication, the number of pairwise stable networks almost surely tends to infinity as $ n $ tends to infinity. In particular, for any given number, the probability that there exist at least that number of pairwise stable networks goes to 1 as $ n $ goes to infinity. These results therefore give a probabilistic foundation to the view of pairwise stability as a weak concept.

The probabilistic approach used in this paper has previously been adopted in several other contexts, most notably in strategic games. 
In an early contribution, Goldman (1957) studies random zero-sum games and explicitly computes the probability that such a game admits a pure strategy Nash equilibrium. Goldberg, Goldman, and Newman (1968) and Dresher (1970) extend this analysis to $n$-player games, calculate the probability for a game to have a pure strategy Nash equilibrium, and compute the limit of this probability as the number of the strategies becomes large. 
Rinott and Scarsini (2000) allow correlation in payoffs, and also consider the asymptotic distribution of the number of pure strategy Nash equilibria as the number of players becomes large. Amiet, Collevecchio, Scarsini, and Zhong (2021) study the random geometry of pure strategy Nash equilibria in games with a large number of players where each player has two strategies.

The literature on random games is not restricted to pure strategy Nash equilibrium as a solution concept. For instance, McLennan (2005) and McLennan and Berg (2005) focus on the expected number of mixed strategy Nash equilibria, B\a'{a}r\a'{a}ny, Vempala, and Vetta (2007) study convergence of algorithms to compute Nash equilibria in random games, while Amiet, Collevecchio, Scarsini, and Zhong (2021) study the probability of convergence of best-response dynamics to a Nash equilibrium.

Matching theory is another area where the probabilistic approach has been fruitfully applied. Pittel (1989) studies a marriage market with an equal number of men and women and assumes that the preferences over potential partners are randomly generated. He pins down the asymptotic behavior of the expected number of stable matchings as the number of individuals becomes large. Ashlagi, Kanoria, and Leshno (2017) extend the analysis to marriage markets where the numbers of men and women differ and show that such an asymmetry leads to an essentially unique stable matching.

Differences in context lead to different results and require the use of different techniques. In a random strategic game, all pure strategy profiles are ex-ante identical, and thus equally likely to be a pure strategy Nash equilibrium. In contrast, pairwise stability is an asymmetric concept where link deletions are unilateral, while link creations are bilateral. This leads to different statistical properties across networks and a decreasing probability that a network is pairwise stable as more links are added.

In comparison to random matching models, the probability space on the set of networks is much larger as each individual can have multiple links. Moreover, the solution concepts differ. Pairwise stability of a network considers adding or deleting one link at a time. In a matching, adding a new link might affect the current partners and hence must be accompanied by the deletion of up to two links. Due to Gale and Shapley (1962), a stable matching is guaranteed to exist for any preference profile. In contrast, it is easy to give examples without any pairwise stable network.

The paper is organized as follows. Section~2 introduces the network model with random utilities. In Section~3, we study the probability that a given network is pairwise stable.  We then provide our main results on the expected number of pairwise stable networks in Section~4 and the variance of the normalized number of pairwise stable networks in Section~5. We provide a short discussion in Section~6. 
\section{The Network Model with Random Utilities}
In this section, we introduce our probabilistic model of networks.

Let $\N = \{1,2,\ldots\}$. For $n \in \N$, let $I_{n} = \{1,\ldots,n\}$ be a population of size $n$ and let $L_{n} = \{\{i,j\}: 1 \leq i < j \leq n\}$ be the set of all possible links in the population $I_{n}$. A typical link is written as $ij.$ The cardinality of $L_{n}$ is given by $\ell_{n} = n(n-1)/2$. A network in the population $I_{n}$ is any subset of $L_{n}$. We let $G_{n} = 2^{L_{n}}$ denote the set of networks in the population $I_{n}. $
A \textit{network model of size $n$} is a function $ u : I_{n} \times G_{n} \to [0,1]$. Let $ U_{n} = [0,1]^{I_{n} \times G_{n}} $ denote the set of network models of size $n$.

\begin{defn}[Pairwise Stability, Jackson and Wolinsky (1996)]\label{defn.ps}
{\rm Let $ n \in \N $ and $ u \in U_{n}. $ A network $ g \in G_{n} $ is \emph{pairwise stable} if (i) for every $ij \in g$, $ u_i(g) \geq u_i(g-ij)$ and $u_j(g) \geq u_j(g-ij)$ and (ii) for every $ij \in L_n \setminus g$, there exists $k \in \{i,j\}$ such that $u_k(g) \geq u_k(g+ij)$.}
\end{defn}

In the above notion of pairwise stability, condition (i) requires immunity to unilateral link deletion and condition (ii) requires immunity to bilateral link addition. The requirements for link deletion and link addition are therefore not symmetric.

In this paper, we endow $U_{n}$ with the product $ \mathbb{P}_{n} $ of uniform distributions on $ [0,1] $ and refer to $ (U_{n}, \mathbb{P}_{n}) $ as a {\it network model with random utilities.} Put equivalently, we assume that all coordinate functions $u_{i}(g)$, with $i \in I_{n}$ and $g \in G_{n}$, are independently and uniformly distributed on the interval $[0,1]$.

We use uniform distributions on $ [0,1] $ only for notational convenience. To see that our results extend to arbitrary atomless distributions, take such a distribution $ F $ and transform the utility of every agent $ i \in I_{n} $ by $ F. $ Since $ F $ is a strictly positive monotonic transformation, up to a set of points with measure zero, the set of pairwise stable networks is unaffected by the transformation. 

In accordance with the standard terminology in probability theory, subsets of $U_{n}$ are henceforth called events. Note that the event that an individual obtains the same utility at two different networks has probability zero. This implies that, almost surely, for all $ i \in I_{n}, $ the ranking of the networks is strict. For this paper, it will therefore not matter if we define pairwise stability as in Definition~2.1 or use the more demanding version where link addition takes place as soon as one player benefits from doing so and the other player is at least indifferent.

We now define events that are particularly important in our study, including the event that a given network is dominated by another one in the sense of conditions~(i) and (ii) in Definition~\ref{defn.ps} and the event that a given network is pairwise stable.

Let $n \in \N$, $g_{0} \in G_n$, and $ij \in L_{n}$. We define $g_{1} = g_{0} + ij$ in case $ij$ is not in $g_{0}$ and $g_{1} = g_{0} - ij$ in case it is and denote the event that $ g_{1} $ dominates $ g_{0} $ by $ D(g_{0},g_{1}). $ The complement of this event is denoted by $ D^{\rm c}(g_{0},g_{1}). $
\[
\begin{array}{rcll}
D(g_{0},g_{0} - ij) & \hspace*{-2mm} = \hspace*{-2mm} & \{u \in U: u_{i}(g_{0}) < u_{i}(g_{0} - ij)\text{ or }u_{j}(g_{0}) < u_{j}(g_{0} - ij)\}, & ij \in g_{0}\\
D(g_{0},g_{0} + ij) & \hspace*{-2mm} = \hspace*{-2mm} & \{u \in U: u_{i}(g_{0}) < u_{i}(g_{0} + ij)\text{ and }u_{j}(g_{0}) < u_{j}(g_{0} + ij)\}, & ij \in L_n \setminus g_{0}.
\end{array}
\]
For $g \in G_{n},$ let us write $S_{g}$ for the event that $g$ is pairwise stable:
\[
S_{g} = \bigcap_{ij \in g} D^{\rm c}(g,g-ij) \bigcap_{ij \in L_{n} \setminus g} D^{\rm c}(g,g+ij).
\]
The next sections are devoted to the probability that the event $ S_{g} $ occurs and the expected value of the number of networks for which this event occurs.
\section{The Probability that a Network Is Pairwise Stable}\label{sec.structure}


In this section, we derive an integral formula for the probability that a given network is pairwise stable. We use it to show that networks with more links are less likely to be stable and to compute the probability that the complete network is pairwise stable.

We first derive some facts about the conditional probability that a network is pairwise stable.
\begin{lem}\label{lem:conditionalprobs}
Consider the network model with random utilities $ (U_{n}, \mathbb{P}_{n}) $ and let $ g \in G_{n}. $
\begin{enumerate}
\item Given $ u(g) \in [0,1]^{I_{n}}, $ it holds that
\[\begin{array}{rcll}
\mathbb{P}_{n}(D^{\rm c}(g,g-ij)|u(g)) & = & u_{i}(g) u_{j}(g), & ij \in g,\\
\mathbb{P}_{n}(D^{\rm c}(g,g+ij)|u(g)) & = & 1 - (1-u_{i}(g)) (1-u_{j}(g)), & ij \in L_{n} \setminus g.
\end{array}
\]
\item Given $ u(g) \in [0,1]^{I_{n}}, $ the events in the collection $\{D(g,g - ij): ij \in g\} \cup \{D(g,g + ij): ij \in L_{n} \setminus g\} $ are conditionally independent.
\item Given $ u(g) \in [0,1]^{I_{n}}, $ the conditional probability that $g$ is pairwise stable is
\begin{equation}\label{eqn.condstab}
\mathbb{P}_{n}(S_{g}|u(g)) = \prod_{ij \in g} u_{i}(g) u_{j}(g) \prod_{ij \in L_{n} \setminus g} [1 - (1-u_{i}(g)) (1-u_{j}(g))].
\end{equation}
For every $ i \in I_{n}, $ this probability is non-decreasing in $ u_{i}(g). $
\item Given $ u(N(g)) = (u(g-ij)_{ij \in g}, u(g+ij)_{ij \in L_{n} \setminus g}) \in [0,1]^{I_{n} \times L_{n}}, $ the conditional probability $ \mathbb{P}_{n}(S_{g}|u(N(g))) $ is non-increasing in $ u_{i}(g - ij), $ $ ij \in g, $ and non-increasing in $u_{j}(g + ij), $ $ ij \in L_{n} \setminus g. $
\end{enumerate}
\end{lem}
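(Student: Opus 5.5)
The whole lemma rests on one structural observation. The neighbouring networks $g-ij$ ($ij\in g$) and $g+ij$ ($ij\in L_n\setminus g$) are pairwise distinct, and each differs from $g$. Hence the random variables $u(g)\in[0,1]^{I_n}$ and the vectors $u(g\pm ij)\in[0,1]^{I_n}$, one for each neighbour, form $\ell_n+1$ mutually independent families of independent uniforms.

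For part~1 we condition on $u(g)$. The event $D^{\rm c}(g,g-ij)$ for $ij\in g$ is $\{u_i(g-ij)\le u_i(g)\}\cap\{u_j(g-ij)\le u_j(g)\}$. These two inequalities involve independent uniforms that are independent of $u(g)$, so the conditional probability is $u_i(g)u_j(g)$. For $ij\notin g$, the event $D(g,g+ij)$ is $\{u_i(g+ij)>u_i(g)\}\cap\{u_j(g+ij)>u_j(g)\}$, which has conditional probability $(1-u_i(g))(1-u_j(g))$; taking complements gives the stated formula. (Ties have probability zero, so strict versus weak inequalities do not matter.)

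For part~2, note that conditional on $u(g)$ each event $D(g,g\pm ij)$ is measurable with respect to $u(g\pm ij)$ alone. These are independent families, independent of $u(g)$, so the events are conditionally independent. I would make this rigorous via Fubini, or via the fact that functions of disjoint blocks of independent variables are independent, with $u(g)$ frozen.

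Part~3 then follows because conditional independence is preserved under complements. The conditional probability of the intersection $S_g$ is therefore the product of the factors from part~1, which gives \eqref{eqn.condstab}. For monotonicity in $u_i(g)$, observe that every factor is nonnegative and non-decreasing in $u_i(g)$: a factor $u_iu_j$ is linear with nonnegative coefficient $u_j$, and $1-(1-u_i)(1-u_j)$ has derivative $1-u_j\ge 0$ in $u_i$. Factors not involving $i$ are constant. A product of nonnegative non-decreasing functions is non-decreasing.

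Part~4 is the symmetric computation, now conditioning on the neighbours' utilities $u(N(g))$ and integrating out $u(g)$, which is independent of them. We have
\[
\mathbb{P}_n(S_g\mid u(N(g)))=\int_{[0,1]^{I_n}}\prod_{ij\in g}\mathbf 1\{u_i\ge a_{ij}^i,\ u_j\ge a_{ij}^j\}\prod_{ij\notin g}\bigl(1-\mathbf 1\{u_i<b_{ij}^i,\ u_j<b_{ij}^j\}\bigr)\,du,
\]
where $a^k_{ij}=u_k(g-ij)$ and $b^k_{ij}=u_k(g+ij)$. Each indicator factor is pointwise non-increasing in every $a$ and $b$ variable. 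Indeed, raising $a^i_{ij}$ shrinks $\{u_i\ge a^i_{ij}\}$, and raising $b^k_{ij}$ enlarges the domination set, so the complement shrinks. Hence the integrand, and therefore the integral, is non-increasing in each of these variables.

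The only delicate point I anticipate is the independence bookkeeping in part~2, and in particular checking that all neighbouring networks are distinct. This holds because $g-ij$ and $g+ij$ differ from $g$ in exactly the link $ij$, so distinct links give distinct networks. The remaining steps are routine.
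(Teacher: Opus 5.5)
Your proof is correct and follows the paper's argument. The paper likewise gets parts 1--3 from the independence of the disjoint blocks $u(g)$ and $u(g\pm ij)$, $ij\in L_n$, and obtains part 4 from the analogous formula conditional on the neighbours' utilities; you add the details the paper calls straightforward, namely monotonicity via products of nonnegative monotone factors and the explicit indicator integral for part 4.
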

\begin{proof}
Direct computation using the uniform distribution on $ [0,1] $ yields (1). To see (2), note that, given $ u(g), $ for each $ ij \in g $, the event $ D(g,g - ij) $ is measurable with respect to the sigma-algebra $ \Sigma_{g,ij} $ generated by the pair $ u_{i}(g - ij), u_{j}(g - ij) $ of stochastic variables and the event $ D(g,g + ij) $ is measurable with respect to the sigma-algebra $ \Sigma_{g,ij} $ generated by the pair $ u_{i}(g + ij), u_{j}(g + ij) $ of stochastic variables. The collection $\{u_{i}(g - ij), u_{j}(g - ij) : ij \in g\} \cup \{u_{i}(g + ij), u_{j}(g + ij) : ij \in L_{n} \setminus g\}$ consists of exactly $2\ell_{n}$ independent stochastic variables. It follows that the sigma-algebras in the collection $\{\Sigma_{g,ij}:ij \in L_{n}\}$ are independent. A straightforward calculation shows that Lemma~\ref{lem:conditionalprobs}.3 follows from Lemma~\ref{lem:conditionalprobs}.1 and Lemma~\ref{lem:conditionalprobs}.2. A formula analogous to (\ref{eqn.condstab}) gives Lemma~\ref{lem:conditionalprobs}.4.
\end{proof}

Let $ g \in G_{n}. $ In order to derive an integral formula for the probability of the event $ S_{g}, $ we define the function $f_{g} : [0,1]^{n} \to [0,1]$ by
\[
f_{g}(v_{1},\ldots,v_{n}) = \prod_{ij \in g} v_{i} v_{j} \prod_{ij \in L_{n} \setminus g} [1 - (1-v_{i}) (1-v_{j})], \quad (v_{1},\ldots,v_{n}) \in [0,1]^{n}.
\]

\begin{thm}\label{thm.probstab} Consider the network model with random utilities $ (U_{n}, \mathbb{P}_{n}) $. For every $ g \in G_{n}, $ it holds that
\[
\mathbb{P}_{n}(S_{g}) = \int_{v_{1} \in [0,1]} \cdots \int_{v_{n} \in [0,1]} f_{g}(v_{1},\ldots,v_{n})dv_{n} \cdots dv_{1}.
\]
In particular, it holds that
\[
\begin{array}{rcl}
\mathbb{P}_{n}(S_{\emptyset}) & = & \int_{\bar{v}_{1} \in [0,1]} \cdots \int_{\bar{v}_{n} \in [0,1]} \prod_{ij \in L_{n}} [1 - \bar{v}_{i} \bar{v}_{j}] d\bar{v}_{n} \cdots d\bar{v}_{1}, \\
\mathbb{P}_{n}(S_{L_{n}}) & = & \tfrac{1}{n^n}.
\end{array}
\]
\end{thm}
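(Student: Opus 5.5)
The plan is to obtain the general formula by conditioning on the vector $u(g)=(u_1(g),\ldots,u_n(g))$ and then applying the law of total probability. By Lemma~\ref{lem:conditionalprobs}.3, the conditional probability $\mathbb{P}_n(S_g\mid u(g))$ equals $f_g(u_1(g),\ldots,u_n(g))$.

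Under $\mathbb{P}_n$ the coordinates $u_1(g),\ldots,u_n(g)$ are independent and uniform on $[0,1]$, so their joint law is Lebesgue measure on $[0,1]^n$. Hence
\[
\mathbb{P}_n(S_g)=\mathbb{E}\bigl[\mathbb{P}_n(S_g\mid u(g))\bigr]=\int_{[0,1]^n} f_g(v)\,dv .
\]
The integrand is a bounded polynomial, so Fubini's theorem lets us write this as the iterated integral in the statement. The one point needing care is justifying the conditioning. I would do it through the independence structure used in the proof of Lemma~\ref{lem:conditionalprobs}: $u(g)$ is independent of the $2\ell_n$ neighbouring utilities $u_i(g\pm ij),u_j(g\pm ij)$. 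Therefore the conditional probability is obtained by integrating out those neighbouring variables with $u(g)$ held fixed. This is a standard disintegration of a product measure, so I expect no real obstacle.

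For the empty network, every link lies in $L_n\setminus g$. The formula therefore gives the integral of $\prod_{ij\in L_n}[1-(1-v_i)(1-v_j)]$. Substituting $\bar v_i=1-v_i$ preserves Lebesgue measure on $[0,1]$ and turns the integrand into $\prod_{ij}[1-\bar v_i\bar v_j]$, which is the stated expression.

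For the complete network $g=L_n$, the integrand is $\prod_{ij\in L_n}v_iv_j$. Each $i$ belongs to exactly $n-1$ links, so this product equals $\prod_{i=1}^n v_i^{\,n-1}$. The integral then factorizes into $n$ copies of $\int_0^1 v^{n-1}\,dv = 1/n$, giving $\mathbb{P}_n(S_{L_n}) = 1/n^n$.

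The only step requiring attention is the degree count for the complete network together with the measure-preserving change of variables. All the other steps are routine consequences of Lemma~\ref{lem:conditionalprobs}.
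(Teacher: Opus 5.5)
Your proposal is correct and follows the paper's proof essentially step for step: condition on $u(g)$ using Lemma~\ref{lem:conditionalprobs}.3 and take expectations, apply the measure-preserving substitution $\bar v_i = 1-v_i$ for the empty network, and use that each individual lies in exactly $n-1$ links to factor the integrand as $\prod_i v_i^{n-1}$ for the complete network. Your remarks on Fubini and on disintegrating the product measure are sound; they make explicit what the paper leaves implicit.
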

\begin{proof}
Equation (\ref{eqn.condstab}) can be written as $\mathbb{P}_{n}(S_{g}|u(g))  = f_{g}(u(g))$. Taking the expectation with respect to $u(g)$, we obtain the first expression.

The second expression is obtained in a similar way, after a change of variables, where $ 1 - v_{i} $ is replaced by $ \bar{v}_{i} $ and $ 1 - v_{j} $ by $ \bar{v}_j, $ which are stochastic variables with the same joint distribution, as both stochastic variables are independent and uniform on $ [0,1]. $ To obtain the third expression, note that
\[
f_{L_{n}}(v_{1},\ldots,v_{n}) = \prod_{ij \in L_{n}} v_{i} v_{j} = \prod_{i=1}^{n} v_{i}^{n-1}, \quad (v_{1},\ldots,v_{n}) \in [0,1]^{n},
\]
where the second equality follows from the observation that each individual is part of exactly $ n - 1 $ links.
We have that
\[
\mathbb{P}_{n}(S_{L_{n}}) = \prod_{i=1}^{n} \int_{v_{i} \in [0,1]} v_{i}^{n-1} dv_{i} = \prod_{i=1}^{n} \tfrac{1}{n} v^{n}_{i} \vert^{1}_{0} = \prod_{i=1}^{n} \tfrac{1}{n} = \tfrac{1}{n^n}.
\]
\end{proof}

Theorem~\ref{thm.probstab} provides an exact expression for the probability that a given network $ g $ is pairwise stable. This expression is very similar to the one for the probability that a matching is stable in Equation (3) on page~95 of Knuth (1976). The main difference is the adaptation of the function $f_g$ to pairwise stability in networks, where link deletion is unilateral, while link addition requires a bilateral agreement. This expression confirms the intuition that link deletion is easier than link addition.

\begin{thm}\label{lem:monotonicity}
Consider the network model with random utilities $ (U_{n}, \mathbb{P}_{n}) $.
Let $ g \in G_{n} $ and $ ij \in L_{n} \setminus g. $ Then we have $\mathbb{P}_{n}(S_{g+ij}) \leq \mathbb{P}_{n}(S_{g}). $ In particular, the empty network $ \emptyset $ has the highest probability of being pairwise stable and the complete network $L_{n}$ the lowest.
\end{thm}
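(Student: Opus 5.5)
We compare the two integrals given by Theorem~\ref{thm.probstab}, using the same integration variables $(v_1,\ldots,v_n)\in[0,1]^n$ for both networks.

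The functions $f_{g+ij}$ and $f_{g}$ differ only in the factor attached to the link $ij$. Every other link $kl\neq ij$ lies in $g$ if and only if it lies in $g+ij$, so it contributes the identical factor $v_kv_l$ or $1-(1-v_k)(1-v_l)$ to both products. Writing $R(v)$ for the common product over $L_n\setminus\{ij\}$, we get
\[
f_{g+ij}(v)=v_iv_j\,R(v),\qquad f_{g}(v)=\bigl[1-(1-v_i)(1-v_j)\bigr]R(v).
\]
The key pointwise inequality is
\[
1-(1-v_i)(1-v_j)-v_iv_j=v_i+v_j-2v_iv_j=v_i(1-v_j)+v_j(1-v_i)\geq 0
\]
for $v_i,v_j\in[0,1]$. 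Since $R(v)\geq 0$, this gives $f_{g+ij}\leq f_g$ on $[0,1]^n$. Integrating this inequality and applying Theorem~\ref{thm.probstab} to both networks yields $\mathbb{P}_{n}(S_{g+ij})\leq\mathbb{P}_{n}(S_{g})$.

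For the extremal statement, note that every $g$ can be reached from $\emptyset$ by adding its links one at a time, and $L_n$ can be reached from $g$ in the same way. Chaining the inequality along these sequences gives $\mathbb{P}_{n}(S_{L_n})\leq\mathbb{P}_{n}(S_g)\leq\mathbb{P}_{n}(S_\emptyset)$.

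There is no real obstacle here. The only point needing care is to observe that the integral representation allows a pointwise comparison of integrands in common variables. This is legitimate even though the utilities $u(g)$ and $u(g+ij)$ are different random variables, because the two probabilities are given by integrals of the same form over the same domain.
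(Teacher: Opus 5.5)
Your proof is correct and follows essentially the same route as the paper: a pointwise comparison of the integrands $f_{g+ij}$ and $f_g$ from Theorem~\ref{thm.probstab}, which differ only in the factor attached to $ij$. The paper phrases the comparison as the ratio $f_g(v)/f_{g+ij}(v) = 1/v_i + 1/v_j - 1 \geq 1$ on $(0,1]^n$, whereas your difference $v_i(1-v_j)+v_j(1-v_i)\geq 0$ holds on all of $[0,1]^n$ without dividing. You also spell out the chaining argument for the extremal claim, which the paper leaves implicit.
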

\begin{proof}
Note that, for each $v=(v_{1},\ldots,v_{n}) \in (0,1]^{n}, $

\begin{equation*}
\frac{f_{g}(v)}{f_{g+ij}(v)}  =  \frac{[1 - (1-v_{i}) (1-v_{j})]}{v_{i} v_{j}}   =  \frac{1}{v_j} + \frac{1}{v_i} - 1   \geq  1,
\end{equation*}
where the inequality uses $ 0 < v_{i}, v_{j} \leq 1$. The result now follows from Theorem~\ref{thm.probstab}.
\end{proof}

As an illustration, the next example computes the probability that particular networks are pairwise stable for the case with three individuals.
\begin{exmp}\label{exl.n=3}\rm
Let $ n = 3. $ In this case, there are $2^3=8$ different networks, the empty network, three networks with exactly one link, three networks with exactly two links, and the complete network. This results in $8!=40320$ possible strict preference orderings per individual, so to $(40320)^3$ possible strict preference profiles.

We compute the probability that a given network $ g \in G_{n} $ is pairwise stable. Using Theorem~\ref{thm.probstab}, we have
\[\mathbb{P}_{n}(S_{g}) = \int_{v_{1} \in [0,1]} \int_{v_{2} \in [0,1]} \int_{v_{3} \in [0,1]} \prod_{ij \in g} v_{i} v_{j} \prod_{ij \in L_{n} \setminus g} [1 - (1-v_{i}) (1-v_{j})]  dv_{3} dv_{2} dv_{1}.\]


Evaluating the resulting integrals, we obtain Table~\ref{table.exp}. Note that all networks with the same number of links have the same probability of being pairwise stable. This is true only in the special case $n=3$ where a network with two links always involves one player who has two links. In line with Theorem~\ref{lem:monotonicity}, the probability for pairwise stability is decreasing in the number of links.

\begin{table}[t]
\begin{center}
\renewcommand{\arraystretch}{2}
\begin{tabular}{l|cccccccc}\toprule
$g$ & $\oslash$ & $\{12\}$ & $\{13\}$ & $\{23\}$ & $\{12,13\}$ & $\{12,23\}$ & $\{13,23\}$ & $\{12,13,23\}$\\\midrule
$\mathbb{P}_{3}(S_{g})$& $\frac{50}{108}$& $\frac{19}{108}$& $\frac{19}{108}$& $\frac{19}{108}$& $\frac{8}{108}$& $\frac{8}{108}$& $\frac{8}{108}$ & $\frac{4}{108}$\\\bottomrule
\end{tabular}
\end{center}
\caption{The probability for a given network to be pairwise stable for the case $ n = 3.$}\label{table.exp}
\end{table}
The expected number of pairwise stable networks with three individuals is equal to
\[
\tfrac{50}{108}+3 \cdot \tfrac{19}{108} + 3 \cdot \tfrac{8}{108} + \tfrac{4}{108} = \tfrac{5}{4} = 1.25.
\]
\hspace*{\fill} $ \triangle $
\end{exmp}
The next example illustrates that with four individuals, two networks with the same number of links may not have the same probability of being pairwise stable.
\begin{exmp}\label{exl.n=4}\rm
Consider the networks $ g_{0} $ and $ g_{1}$:
\begin{center}
\begin{tikzpicture}[node distance={15mm}, thick, main/.style = {draw, circle}]
\node[main] (1) {$1$};
\node[main] (2) [above right of=1] {$2$};
\node[main] (3) [below right of=1] {$3$};
\node[main] (4) [above right of=3] {$4$};
\draw[-] (1) -- (2);
\draw[-] (1) -- (4);
\draw[-] (2) -- (4);
\draw[-] (3) -- (4);
\node at(0,1){$g_{0}$};
\end{tikzpicture}
\hspace{2cm}
\begin{tikzpicture}[node distance={15mm}, thick, main/.style = {draw, circle}]
\node[main] (1) {$1$};
\node[main] (2) [above right of=1] {$2$};
\node[main] (3) [below right of=1] {$3$};
\node[main] (4) [above right of=3] {$4$};
\draw[-] (1) -- (2);
\draw[-] (1) -- (3);
\draw[-] (2) -- (4);
\draw[-] (3) -- (4);
\node at(2,1){$g_{1}$};
\end{tikzpicture}
\end{center}
Using Theorem~\ref{thm.probstab}, we find that
\[
\mathbb{P}_{4}(S_{g_{0}}) = \tfrac{3}{256} > \tfrac{25}{3204} = \mathbb{P}_{4}(S_{g_{1}}).
\]
\hspace*{\fill} $ \triangle $
\end{exmp}

\section{The Expected Number of Pairwise Stable Networks}\label{sec.firstmoments}
In this section, we consider the expected number of pairwise stable networks. We derive an integral formula and an exact expression for this number as well as more tractable lower and upper bounds. We establish the order of the expected number of pairwise stable networks and derive the limit of the lower and upper bounds as the number of individuals tends to infinity.
\subsection{Exact Expressions}
The number of pairwise stable networks is given by the stochastic variable
\[
s_{n} = \sum_{g \in G_{n}} 1_{S_{g}}.
\]
The next theorem presents an integral formula for the expected number of pairwise stable networks.

\begin{thm}\label{thm:expnr0}
Consider the network model with random utilities $ (U_{n}, \mathbb{P}_{n}) $. The expected number of pairwise stable networks is equal to
\[
\mathbb{E}(s_{n}) =\int_{v_{1} \in [0,1]} \cdots \int_{v_{n} \in [0,1]} \prod_{ij \in L_{n}} [v_{i}+ v_{j}] dv_{n} \cdots dv_{1}. \label{eqn.expnr0}
\]
\end{thm}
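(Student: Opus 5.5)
By linearity of expectation, $\mathbb{E}(s_{n}) = \sum_{g \in G_{n}} \mathbb{P}_{n}(S_{g})$. Theorem~\ref{thm.probstab} writes each summand as the integral of $f_{g}$ over $[0,1]^{n}$. Since $G_{n}$ is finite, we can exchange the finite sum with the integral. This gives
\[
\mathbb{E}(s_{n}) = \int_{v_{1} \in [0,1]} \cdots \int_{v_{n} \in [0,1]} \sum_{g \in G_{n}} f_{g}(v_{1},\ldots,v_{n}) \, dv_{n} \cdots dv_{1}.
\]
It therefore remains to compute the integrand $\sum_{g} f_{g}(v)$ pointwise.

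The key step is to factorize this sum over links. For each link $ij \in L_{n}$, define the two weights $a_{ij}(v) = v_{i}v_{j}$ and $b_{ij}(v) = 1-(1-v_{i})(1-v_{j})$. By definition, $f_{g}(v) = \prod_{ij \in g} a_{ij}(v) \prod_{ij \in L_{n}\setminus g} b_{ij}(v)$. A network $g$ is the same thing as a choice, for each link independently, of whether the link is present. Summing over all $g \in G_{n}$ is therefore the expansion of a product of binomials:
\[
\sum_{g \in G_{n}} f_{g}(v) = \prod_{ij \in L_{n}} \bigl(a_{ij}(v) + b_{ij}(v)\bigr).
\]
Now $a_{ij}+b_{ij} = v_{i}v_{j} + 1 - (1 - v_{i} - v_{j} + v_{i}v_{j}) = v_{i}+v_{j}$. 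Substituting this into the integral yields the claimed formula.

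There is no real obstacle. The only point needing care is the identity $\sum_{g} \prod_{ij\in g} a_{ij}\prod_{ij\notin g} b_{ij} = \prod_{ij}(a_{ij}+b_{ij})$, which can be proved by induction on the number of links: split the networks according to whether a fixed link $ij$ is present. The interchange of sum and integral is immediate, since the sum is finite and each $f_{g}$ is bounded and measurable.
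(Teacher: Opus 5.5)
Your proof is correct and matches the paper's argument: linearity of expectation, Theorem~\ref{thm.probstab}, swapping the finite sum with the integral, and then factorizing $\sum_{g} f_{g}(v)$ link by link into $\prod_{ij \in L_{n}}[v_{i}v_{j} + 1-(1-v_{i})(1-v_{j})] = \prod_{ij \in L_{n}}[v_{i}+v_{j}]$. Your suggestion to prove the product-expansion identity by induction on the links just makes explicit what the paper justifies in one line.
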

\begin{proof}
We obtain the result using the fact that the expectation of the sum of a number of stochastic variables is equal to the sum of their expectation, and compute the expectation by means of Theorem~\ref{thm.probstab}:
\begin{eqnarray*}
\mathbb{E}(s_{n}) & \hspace*{-2mm} = \hspace*{-2mm} & \mathbb{E}\big(\sum_{g \in G_{n}} 1_{S_{g}}\big) \\
 & \hspace*{-2mm} = \hspace*{-2mm} & \sum_{g \in G_{n}} \mathbb{P}_{n}(S_{g}) \\
 & \hspace*{-2mm} = \hspace*{-2mm} & \sum_{g \in G_{n}} \int_{v_{1} \in [0,1]} \cdots \int_{v_{n} \in [0,1]}  f_{g}(v_{1},\ldots,v_{n}) dv_{n} \cdots dv_{1} \\
 & \hspace*{-2mm} =  & \int_{v_{1} \in [0,1]} \cdots \int_{v_{n} \in [0,1]} \sum_{g \in G_{n}} f_{g}(v_{1},\ldots,v_{n}) dv_{n} \cdots dv_{1}.
\end{eqnarray*}
We next observe that for given $(v_{1},\ldots,v_{n}) \in \prod_{i=1}^{n} [0,1], $ it holds that
\begin{eqnarray*}
\sum_{g \in G_{n}}f_{g}(v_{1},\ldots,v_{n}) & \hspace*{-2mm} = \hspace*{-2mm} & \sum_{g \in G_{n}} \prod_{ij \in g} v_{i} v_{j} \prod_{ij \in L_{n} \setminus g} [1 - (1-v_{i}) (1-v_{j})] \\
 & \hspace*{-2mm} = \hspace*{-2mm} & \prod_{ij \in L_{n}} [v_{i} v_{j} + (1 - (1-v_{i}) (1-v_{j}))] \\
 & \hspace*{-2mm} = \hspace*{-2mm} & \prod_{ij \in L_{n}} [v_{i}+ v_{j}],
\end{eqnarray*}
where the second equality follows from the fact that the expression is a sum of terms such that for each $ ij \in L_{n} $ either $ v_{i} v_{j} $ or $1-(1-v_{i}) (1-v_{j})$ is a factor.
\end{proof}
We now use the integral in Theorem~\ref{thm:expnr0} to derive a closed-form expression for the number of pairwise stable networks. We need some notation first. Fix a network $g \in G_{n}$. We define
\[
\begin{array}{rcl}
\deg_{i}^{-}(g) & = & \#\{j \in I_{n}:i < j,~ij \in g\},\\
\deg_{i}^{+}(g) & = & \#\{j \in I_{n}:j < i,~ij \in g\},
\end{array}
\]
as the number of ``senior", respectively "junior," partners of $i$ in $g$.  The \textit{seniority degree} of player $i$ in $g$ is defined by
\[X_{i,n}(g) = 1+\deg_{i}^{-}(g)+\deg_{i}^{+}(L_{n} \setminus g).\]
In words, player $i$ scores one point for every senior partner, one point for every missing junior partner, and gets one point for free. We also define the product of the individual seniority degrees as
\[
X_{n}(g) = \prod_{i = 1}^{n}X_{i,n}(g)
\]
and refer to this number as the \it seniority degree \rm of network $ g. $

The next theorem gives an exact expression for the expected number of pairwise stable networks.
\begin{thm}\label{thm:expnr1}
Consider the network model with random utilities $ (U_{n}, \mathbb{P}_{n}) $. The expected number of pairwise stable networks is equal to
\begin{equation}\label{eqn.expnr1}
\mathbb{E}(s_{n})
= \sum_{g \in G_{n}} \frac{1}{X_{n}(g)}.
\end{equation}
\end{thm}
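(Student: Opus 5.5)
The plan is to start from the integral formula of Theorem~\ref{thm:expnr0}, expand the product $\prod_{ij \in L_{n}} [v_{i}+v_{j}]$ into a sum of monomials, and index the monomials by networks $g \in G_{n}$. Integrating each monomial over $[0,1]^{n}$ should then give exactly $1/X_{n}(g)$.

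For the expansion, I write each link as $ij$ with $i<j$. Expanding the product over the $\ell_{n}$ links amounts to choosing, for every link, one of the two summands. There are $2^{\ell_{n}}$ such choices, and I put them in bijection with $G_{n}$ as follows. Given $g$, for each $ij$ with $i<j$:
\begin{itemize}
\item if $ij \in g$, select $v_{i}$, the variable of the junior endpoint;
\item if $ij \notin g$, select $v_{j}$, the variable of the senior endpoint.
\end{itemize}
Every choice pattern arises from exactly one $g$. Hence
\[
\prod_{ij \in L_{n}} [v_{i}+v_{j}] = \sum_{g \in G_{n}} \prod_{k=1}^{n} v_{k}^{e_{k}(g)},
\]
where $e_{k}(g)$ is the number of links whose selected variable is $v_{k}$.

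The step requiring care is computing $e_{k}(g)$ and checking that the convention matches the definition of the seniority degree. Variable $v_{k}$ is selected in two cases. The first is a link $kj \in g$ with $k<j$; there are $\deg_{k}^{-}(g)$ of these. The second is a link $ik \notin g$ with $i<k$; there are $\deg_{k}^{+}(L_{n}\setminus g)$ of these. Thus
\[
e_{k}(g) = \deg_{k}^{-}(g) + \deg_{k}^{+}(L_{n}\setminus g) = X_{k,n}(g) - 1 .
\]
If the indices came out reversed, I would simply swap the convention, choosing $v_{j}$ when $ij \in g$. By the symmetry $v \mapsto$ relabeling this would only reverse the roles of seniors and juniors, so I expect the convention above to be the one that fits the definition of $\deg^{\pm}$.

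Finally, the sum over $G_{n}$ is finite, so it can be interchanged with the integral. The integrand for each $g$ is a product of one-variable functions, so Fubini factorizes it:
\[
\int_{[0,1]^{n}} \prod_{k=1}^{n} v_{k}^{e_{k}(g)} \, dv = \prod_{k=1}^{n} \frac{1}{e_{k}(g)+1} = \prod_{k=1}^{n} \frac{1}{X_{k,n}(g)} = \frac{1}{X_{n}(g)} .
\]
Summing over $g \in G_{n}$ gives (\ref{eqn.expnr1}). As a sanity check I would verify the $n=3$ case against the value $5/4$ from Example~\ref{exl.n=3}. The only real obstacle is the bookkeeping in the exponent computation; the rest is routine.
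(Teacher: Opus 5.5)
Your proposal is correct and follows essentially the same route as the paper. The paper also expands $\prod_{ij \in L_{n}}[v_{i}+v_{j}]$ by selecting $v_{i}$ for $ij \in g$ and $v_{j}$ for $ij \notin g$ (with $i<j$), identifies the exponent of $v_{k}$ as $X_{k,n}(g)-1$, and integrates term by term using Theorem~\ref{thm:expnr0}. Your exponent bookkeeping matches the paper's definitions exactly, so the hedge about possibly swapping the convention is unnecessary.
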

\begin{proof}
We first note that
\begin{equation*}
\prod_{ij \in L_{n}} [v_{i}+ v_{j}] =  \sum_{g \in G_{n}} \prod_{ij \in g} v_{i} \prod_{ij \in L_{n} \setminus g} v_{j} = \sum_{g \in G_{n}} \prod_{i = 1}^{n}v_{i}^{\deg_{i}^{-}(g)+\deg_{i}^{+}(L_{n} \setminus g)}.
\end{equation*}
We then use Theorem~\ref{thm:expnr0} and obtain
\begin{eqnarray*}
\mathbb{E}(s_{n}) & \hspace*{-2mm} = \hspace*{-2mm} & \int_{v_{1} \in [0,1]} \cdots \int_{v_{n} \in [0,1]} \sum_{g \in G_{n}} \prod_{i = 1}^{n}v_{i}^{\deg_{i}^{-}(g)+\deg_{i}^{+}(L_{n} \setminus g)}dv_{n}\cdots dv_{1} \\
 & \hspace*{-2mm} = \hspace*{-2mm} & \sum_{g \in G_{n}} \prod_{i = 1}^{n} \int_{v_{i} \in [0,1]} v_{i}^{\deg_{i}^{-}(g)+\deg_{i}^{+}(L_{n} \setminus g)}dv_{i} \\
 & \hspace*{-2mm} = \hspace*{-2mm} & \sum_{g \in G_{n}} \prod_{i = 1}^{n}\frac{1}{X_{i,n}(g)} \\
 & \hspace*{-2mm} = \hspace*{-2mm} & \sum_{g \in G_{n}} \frac{1}{X_{n}(g)}.
\end{eqnarray*}
\end{proof}
By Theorem~\ref{thm:expnr1}, the expected number of pairwise stable networks is equal to the sum over all networks of the inverse of their seniority degree. We use this expression to compute the expected number of pairwise stable networks for $ n = 2, 3, \ldots, 7. $ The results are presented in Table~\ref{T2}.

\begin{table}[t]
\begin{center}
\renewcommand{\arraystretch}{2}
\begin{tabular}{l|cccccc}
\toprule
$ n $ & 2 & 3 & 4 & 5 & 6 & 7 \\
\midrule
$ \mathbb{E}(s_{n}) $ & $ 1.0000 $ & $ 1.2500 $ & $ 2.2130 $ & $ 6.0039 $ & $ 26.3232 $ & $ 193.5913 $ \\
\bottomrule
\end{tabular}
\end{center}
\caption{The expected number of pairwise stable networks for $ n = 2,3, \ldots, 7.$}\label{T2}
\end{table}
The expected number of pairwise stable networks is reasonably small for numbers of $ n $ less than or equal to 5, but rises rapidly to $ 26.3232 $ when $ n = 6 $ and to $ 193.5913 $ when $ n = 7. $

Although the expression in Theorem~\ref{thm:expnr1} is valid for any number $ n, $ it is impractical to evaluate it for larger values of $ n $, as the number of networks is equal to $ 2^{\frac{n(n-1)}{2}}, $ which corresponds to the number of terms in the sum of (\ref{eqn.expnr1}).
We therefore continue in this section with the derivation of lower and upper bounds on $ \mathbb{E}(s_{n})$ which are easier to compute. These bounds also give us insights in the order of $ \mathbb{E}(s_{n}), $ which we show to be equal to
\[c_{n} = 2^{\ell_{n}}(\tfrac{2}{n+1})^{n} = \Big(\frac{2^{\tfrac{n+1}{2}}}{n+1}\Big)^{n}.\]
The term $2^{\ell_{n}} $ is equal to the total number of networks with $n$ individuals. Consequently, the second term governs the behaviour of the expected fraction of pairwise stable networks with $n$ individuals. 

We use $ c_{n} $ to normalize $ s_{n} $ and define the \textit{normalized number of pairwise stable networks} as the random variable
\[
s_{n}^{*} = \tfrac{1}{c_{n}} s_{n}.
\]
Table~\ref{table.n=7} displays $ \mathbb{E}(s^{*}_{n}) $ for values of $ n $ up to 7, as well as particular lower and upper bounds that we derive in the next two subsections. The last column in Table~\ref{table.n=7} corresponds to $ c_{n}. $
\begin{table}[t]
\begin{center}\renewcommand{\arraystretch}{1.2}
\begin{tabular}{ccccr}
\toprule
$ n $ & $ \mathbb{E}(s_{n}^{*}) $ & $ e^{\tfrac{1}{2} \tfrac{(n\!-\!1)n}{(n\!+\!1)^{2}}} $ & $\big(1+\tfrac{1}{n}\big)^{n}\big(1-\tfrac{1}{2^{n}}\big)^{n}$ & $ c_{n} $ \\
\midrule
2 & 1.1250 & 1.1176 & 1.2656 & 0.8889 \\
3 & 1.2500 & 1.2062 & 1.5880 & 1.0000 \\
4 & 1.3507 & 1.2712 & 1.8859 & 1.6384 \\
5 & 1.4248 & 1.3202 & 2.1231 & 4.2140 \\
6 & 1.4767 & 1.3582 & 2.2943 & 17.8255\\
7 & 1.5124 & 1.3884 & 2.4112 & 128.0000 \\
$\cdots$ & $\cdots$ & $\cdots$ & $\cdots$ & $ \cdots $ \\
$\infty$ & & 1.6487 & 2.7182 & \\
\bottomrule
\end{tabular}
\end{center}
\caption{The expected normalized number of pairwise stable networks, a lower bound, an upper bound, and the normalizing constant.}\label{table.n=7}
\end{table}
\subsection{A Lower Bound on $ \mathbb{E}(s^{*}_{n}) $}
We first turn our attention to the derivation of a lower bound for the expected normalized number of pairwise stable networks $ \mathbb{E}(s^{*}_{n}). $

Towards deriving the lower bound, we interpret $ X_{i,n} $ and $ X_{n} $ as stochastic variables on $G_{n}$, where $ G_{n} $ is endowed with a uniform probability measure. Equivalently, we can think of the Erdős–Rényi–Gilbert model of random graphs on the set of vertices $ I_{n}, $ where each link $ ij $ is chosen with probability $ 1/2, $ independently of the other links. In this model, all graphs in $G_{n}$ are equally likely. We let $\mathbb{E}_{\mathcal{G}}$ denote the corresponding expectation operator.

We then rewrite Equation~\eqref{eqn.expnr1} as
\[
\mathbb{E}(s_{n}) = 2^{\ell_{n}} \mathbb{E}_{\mathcal{G}}(\tfrac{1}{X_{n}}),
\]
or, equivalently,
\begin{equation}
\mathbb{E}(s_{n}^{*}) = (\tfrac{n+1}{2})^{n} \mathbb{E}_{\mathcal{G}}(\tfrac{1}{X_{n}}). \label{E4.2}
\end{equation}

\begin{thm}\label{thm:numberlower}
Consider the network model with random utilities $ (U_{n}, \mathbb{P}_{n}) $. A lower bound on the normalized expected number of pairwise stable networks $ \mathbb{E}(s_{n}^{*}) $ is given by
\[
e^{\tfrac{1}{2} \tfrac{(n-1)n}{(n+1)^{2}}}.
\]
\end{thm}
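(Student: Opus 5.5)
Starting from \eqref{E4.2}, we need a lower bound on $\mathbb{E}_{\mathcal{G}}(1/X_{n})$. Since $x \mapsto e^{-x}$ is convex, Jensen's inequality applied to $1/X_{n} = \exp(-\sum_{i=1}^{n}\log X_{i,n})$ gives
\[
\mathbb{E}_{\mathcal{G}}\big(\tfrac{1}{X_{n}}\big) \geq \exp\Big(-\sum_{i=1}^{n}\mathbb{E}_{\mathcal{G}}(\log X_{i,n})\Big).
\]
Under the Erdős–Rényi–Gilbert measure, each of the $n-1$ potential links of $i$ independently contributes a point with probability $1/2$. Indeed, $i$ scores a point for a present link to a senior and for a missing link to a junior. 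Hence $X_{i,n} = 1 + B_{i}$ with $B_{i} \sim \mathrm{Bin}(n-1,\tfrac12)$, and all $X_{i,n}$ have the same distribution. So the bound becomes $\mathbb{E}(s_{n}^{*}) \geq \exp\big(n[\log\tfrac{n+1}{2} - \mathbb{E}\log(1+B)]\big)$ with $B \sim \mathrm{Bin}(n-1,\tfrac12)$. It therefore remains to show
\[
\mathbb{E}\log(1+B) \leq \log\tfrac{n+1}{2} - \tfrac{1}{2}\tfrac{n-1}{(n+1)^{2}}.
\]

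To bound the expected logarithm, expand around the mean $m = \mathbb{E}(1+B) = \tfrac{n+1}{2}$. The elementary inequality $\log x \leq \log m + \tfrac{x-m}{m} - \tfrac{(x-m)^{2}}{2m^{2}} + \tfrac{(x-m)^{3}}{3m^{3}}$ holds for $x>0$ because the Taylor remainder $-\tfrac{(x-m)^4}{4\xi^4}$ is nonpositive. Taking expectations, the first-order term vanishes and the third central moment vanishes by the symmetry of $\mathrm{Bin}(n-1,\tfrac12)$. The variance is $(n-1)/4$, so
\[
\mathbb{E}\log(1+B) \leq \log m - \tfrac{(n-1)/4}{2(n+1)^{2}/4} = \log\tfrac{n+1}{2} - \tfrac{1}{2}\tfrac{n-1}{(n+1)^{2}}.
\]
Multiplying by $n$ and exponentiating gives exactly $e^{\frac{1}{2}\frac{(n-1)n}{(n+1)^2}}$.

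The main obstacle is this second-order bound on $\mathbb{E}\log(1+B)$: plain Jensen only gives $\log m$, and hence the trivial bound $1$. The point to check carefully is that the cubic Taylor inequality with nonpositive quartic remainder is valid for all $x>0$, including small $x$ far from $m$. The derivative $\tfrac{d^4}{dx^4}\log x = -6/x^4<0$ gives this via the Lagrange form. The symmetry of the binomial then kills the cubic term. If that route failed, I would instead use $\log x \le \log m + \tfrac{x-m}{m} - \tfrac{(x-m)^2}{2m^2}+\dots$ paired with symmetric pairs $x=m\pm d$, using $\log(1-t^2)\le -t^2$ for $t = d/m$. This gives the same quadratic term directly, since $\tfrac12[\log(m+d)+\log(m-d)] = \log m + \tfrac12\log(1-d^2/m^2)$.
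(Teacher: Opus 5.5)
Your proposal is correct and follows the paper's proof essentially step for step. You use Jensen's inequality (applied to the convex exponential rather than, as the paper does, to the concave logarithm, which is equivalent), the identification $X_{i,n}-1\sim\mathsf{Binom}(n-1,1/2)$ with identical distributions across $i$, and the cubic Taylor bound around the mean with nonpositive quartic remainder, where the linear and cubic central moments vanish by symmetry; your explicit Lagrange-remainder justification is slightly more careful than the paper's.
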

\begin{proof}
We prove the lemma by showing that
\[\ln \mathbb{E}(s_{n}^{*}) \geq \tfrac{1}{2} \tfrac{(n-1)n}{(n+1)^{2}}.\]
It holds that
\begin{equation}
\ln (\mathbb{E}_{\cal G}(\tfrac{1}{X_{n}}))  \geq \mathbb{E}_{\cal G}(\ln[\tfrac{1}{{X}_{n}}]) = -\sum_{i=1}^{n} \mathbb{E}_{\cal G}(\ln[X_{i,n}]) = - n \mathbb{E}_{\cal G}(\ln[X_{1,n}]), \label{E4.3}
\end{equation}
where the inequality follows from Jensen's inequality, and the final equality from the fact that the stochastic variables $X_{1,n},\ldots,X_{n,n}$ have the same distribution and hence have the same mean.

The stochastic variable $ X_{1,n} - 1 $ has the same distribution as the degree of individual~1 at a graph in $ G_{n} $ and is therefore equal to $ \mathsf{Binom}(n-1,1/2). $

Let $ B $ be a stochastic variable with a $\mathsf{Binom}(n-1,1/2)$ distribution. We show next that
\begin{equation}
\mathbb{E}(\ln(1+B)) \leq \ln(\tfrac{n+1}{2}) - \tfrac{1}{2} \tfrac{n-1}{(n+1)^{2}}. \label{E4.4}
\end{equation}
It holds that $ \mathbb{E}(B) = (n-1)/2, $ which we denote by $ \mu. $
For $ x > -1, $ the third-order Taylor expansion of $ \ln(1+x) $ around $ \mu $ is equal to
\[
\ln(1+\mu) + \tfrac{2}{n+1} (x-\mu) + \tfrac{1}{2} \tfrac{-4}{(n+1)^{2}} (x-\mu)^{2} + \tfrac{1}{6} \tfrac{16}{(n+1)^{3}} (x-\mu)^{3}.
\]
Since the fourth-order term in the Taylor expansion is less than or equal to zero, it holds that $ \ln(1+x) $ is dominated by the third-order Taylor expansion.
We find that
\[
\begin{array}{rcl}
\mathbb{E} (\ln(1+B)) & \leq & \mathbb{E} \Big(\ln(1+\mu) + \tfrac{2}{n+1} (B-\mu) + \tfrac{1}{2}  \tfrac{-4}{(n+1)^{2}} (B-\mu)^{2} + \tfrac{1}{6} \tfrac{16}{(n+1)^{3}} (B-\mu)^{3}\Big) \\
 & = & \ln(\tfrac{n+1}{2}) + \tfrac{1}{2} \tfrac{-4}{(n+1)^{2}} \tfrac{n-1}{4} \\
 & = & \ln(\tfrac{n+1}{2}) - \tfrac{1}{2} \tfrac{n-1}{(n+1)^{2}},
\end{array}
\]
where the first equality uses that $ \mathbb{E}(B) = (n-1)/2, $ $ \mathbb{E}(B-\mu) = 0, $ $ \mathbb{E}(B-\mu)^{2} = (n-1)/4, $ and $ \mathbb{E}(B-\mu)^{3} = 0. $

We now obtain that
\[
\begin{array}{rcl}
\ln\mathbb{E}(s_{n}^{*}) & = & \ln(\tfrac{n+1}{2})^{n} + \ln (\mathbb{E}_{\cal G}(\tfrac{1}{X_{n}})) \\
 & \geq & \ln(\tfrac{n+1}{2})^{n} - n \mathbb{E}_{\cal G}(\ln[X_{1,n}]) \\
 & \geq & \ln(\tfrac{n+1}{2})^{n} - \ln(\tfrac{n+1}{2})^{n} + \tfrac{1}{2} \tfrac{(n-1)n}{(n+1)^{2}} \\
 & = & \tfrac{1}{2} \tfrac{(n-1)n}{(n+1)^{2}},
\end{array}
\]
where the first equality follows from (\ref{E4.2}), the first inequality from (\ref{E4.3}), and the second inequality from (\ref{E4.4}).
\end{proof}
The lower bound on $ \mathbb{E}(s^{*}_{n}) $ in Theorem~\ref{thm:numberlower} is presented in the third column of Table~\ref{table.n=7}. It is less than or equal to $ \sqrt{e} $ for all values of $ n $ and converges to $ \sqrt{e} $ as $ n $ tends to infinity. 
\subsection{An Upper Bound on $ \mathbb{E}(s^{*}_{n}) $}
We now derive an upper bound on the expected normalized number of pairwise stable networks $ \mathbb{E}(s^{*}_{n}). $
\begin{thm}\label{thm:numberupper}
Consider the network model with random utilities $ (U_{n}, \mathbb{P}_{n}) $. An upper bound on the normalized expected number of pairwise stable networks $ \mathbb{E}(s_{n}^{*}) $ is given by
\[
\big(1+\tfrac{1}{n}\big)^{n} (1-\tfrac{1}{2^{n}})^{n}.
\]
\end{thm}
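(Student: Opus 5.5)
The claim is equivalent to
\[
\mathbb{E}(s_{n}) \le 2^{\ell_{n}}\big(\tfrac{2}{n}\big)^{n}\big(1-\tfrac{1}{2^{n}}\big)^{n},
\]
because $c_{n}(1+\tfrac1n)^{n} = 2^{\ell_{n}}(\tfrac{2}{n+1})^{n}(\tfrac{n+1}{n})^{n} = 2^{\ell_{n}}(\tfrac{2}{n})^{n}$. The per-individual factor has a natural integral form:
\[
\int_{0}^{1}\Big(\tfrac{1+t}{2}\Big)^{n-1}dt=\tfrac{2}{n}\big(1-\tfrac{1}{2^{n}}\big).
\]
So, starting from Theorem~\ref{thm:expnr0} and writing $\prod_{ij\in L_{n}}[v_{i}+v_{j}]=2^{\ell_{n}}\prod_{ij\in L_n}\frac{v_i+v_j}{2}$, it suffices to show
\[
\int_{[0,1]^{n}}\prod_{ij\in L_{n}}\frac{v_{i}+v_{j}}{2}\,dv\;\le\;\prod_{i=1}^{n}\int_{0}^{1}\Big(\frac{1+t_{i}}{2}\Big)^{n-1}dt_{i}.
\]
In words, the integrand should be dominated, in integrated form, by a product of $n$ independent factors, each contributing exponent $n-1$.

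A pointwise inequality pair by pair cannot do this. For instance, $\frac{a+b}{2}\le\sqrt{\frac{1+a}{2}\frac{1+b}{2}}$ does hold on $[0,1]^2$, since $a^{2}+b^{2}+ab\le 1+a+b$. But it only yields exponent $(n-1)/2$ per individual, which is too weak. The domination therefore has to come from a Knuth (1976)-style rewriting of the iterated integral.

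I would first use symmetry to restrict to the ordered region $v_{1}<\dots<v_{n}$ (multiplying by $n!$). I would then reinterpret $\mathbb{E}(s_n)$ probabilistically. For each individual $i$, consider the event that $i$ does not wish to deviate across each of its $n-1$ potential links, conditioned on the utility level $u_{i}(g)$. In the ordered region each pair contributes $\frac{v_i+v_j}{2}$ with $v_i<v_j$, and I would split this factor as
\[
\frac{v_{i}+v_{j}}{2}=\frac{1+v_{i}}{2}\cdot\frac{v_{i}+v_{j}}{1+v_{i}}.
\]
The plan is to assign the first factor to the lower-valued individual and show that the accumulated product of the second factors, integrated over the higher-ranked variables, reproduces the factors $\big(\frac{1+v_{j}}{2}\big)$ still needed for those individuals.

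Concretely, I would integrate the variables one at a time, from the largest $v_{n}$ down to $v_{1}$. At each step I would bound the partial integral by $\big(\frac{1+v_{k}}{2}\big)^{\text{(remaining links)}}$ times a constant, using monotonicity (Lemma~\ref{lem:conditionalprobs}.3) to drop the ordering constraint. Once the ordering constraint is dropped, the $n!$ disappears and the bound factorizes into the product of the $n$ one-dimensional integrals above. The term $(1-2^{-n})$ then emerges automatically from the lower limit $t=0$ in $\int_0^1(\frac{1+t}{2})^{n-1}dt$.

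The main obstacle is this middle step: finding the right domination of the coupled integrand by independent factors without losing a factor $n!$ or halving the exponents. If the ordered-integration bookkeeping fails, I would try the other way the integrand can be read. For each fixed $v$, $\prod_{ij}\frac{v_{i}+v_{j}}{2}$ is the probability that independent fair coins, one per link, each select an endpoint and that endpoint's uniform variable exceeds a threshold. I would then bound this by conditioning on each individual's maximum threshold, which should give exactly $\frac{1+t_i}{2}$ per link endpoint.
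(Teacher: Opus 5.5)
Your reduction is correct. The equivalent target $\mathbb{E}(s_n)\le 2^{\ell_n}(\tfrac{2}{n})^n(1-2^{-n})^n$, the value $\int_0^1(\tfrac{1+t}{2})^{n-1}dt=\tfrac{2}{n}(1-2^{-n})$, and the resulting comparison between the $n$-fold integral and the product of one-dimensional integrals all check out. The step that proves that comparison, however, is missing. You call it ``the main obstacle'' and offer only a plan: order the variables, integrate from $v_n$ downward, bound partial integrals, and use Lemma~\ref{lem:conditionalprobs}.3 to ``drop the ordering constraint''. None of this is carried out. The appeal to Lemma~\ref{lem:conditionalprobs}.3 also does not fit, since that lemma concerns monotonicity of $\mathbb{P}_n(S_g\mid u(g))$ for a fixed $g$ and says nothing about removing an ordering of integration variables. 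The coin-flipping fallback is likewise only a sketch. As written, the key inequality is not established.

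More importantly, you abandoned the right approach on a false premise. A pairwise pointwise bound does give the full exponent $n-1$. For $a,b\in[0,1]$,
\[
\frac{a+b}{2}\le\frac{1+a}{2}\cdot\frac{1+b}{2}\quad\Longleftrightarrow\quad a+b\le 1+ab\quad\Longleftrightarrow\quad (1-a)(1-b)\ge 0 .
\]
Your square-root inequality is just a weaker consequence of this, because $\tfrac{(1+a)(1+b)}{4}\le 1$; that is why it cost you half the exponent. Each individual lies on exactly $n-1$ links, so multiplying over $L_n$ gives $\prod_{ij\in L_n}\frac{v_i+v_j}{2}\le\prod_{i=1}^n\big(\frac{1+v_i}{2}\big)^{n-1}$ pointwise on $[0,1]^n$. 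Integrating this factorized bound over $[0,1]^n$ yields exactly $\big(\tfrac{2}{n}(1-2^{-n})\big)^n$.

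This is the paper's proof, written there as $v_i+v_j\le\frac{1+v_i}{\sqrt2}\cdot\frac{1+v_j}{\sqrt2}$ and applied directly to the integral formula of Theorem~\ref{thm:expnr0}. Your own splitting already contains it: $\frac{v_i+v_j}{1+v_i}\le\frac{1+v_j}{2}$ is the same inequality and holds pointwise whether or not $v_i<v_j$. So no ordering, no factor $n!$ to absorb, and no integration over higher-ranked variables is needed.
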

\begin{proof}
For every $ij \in L_{n}$ it holds that $v_{i} \leq 1$ and $v_{j} \leq 1$, so $v_{i} + v_{j} \leq 1 + v_{i} v_{j}, $ which is equivalent to
\[
v_{i} + v_{j} \leq\big(\tfrac{1+v_{i}}{\sqrt{2}}\big) \big(\tfrac{1+v_{j}}{\sqrt{2}}\big).
\]
Using Theorem~\ref{thm:expnr0}, we obtain
\begin{eqnarray*}
\mathbb{E}(s_{n}) & \hspace*{-2mm} = \hspace*{-2mm} & \int_{v_{1} \in [0,1]} \cdots \int_{v_{n} \in [0,1]} \prod_{ij \in L_{n}} [v_{i}+ v_{j}] dv_{n} \cdots dv_{1} \\
 & \hspace*{-2mm} \leq \hspace*{-2mm} & \int_{v_{1} \in [0,1]} \cdots \int_{v_{n} \in [0,1]} \prod_{ij \in L_{n}} \big(\tfrac{1+v_{i}}{\sqrt{2}}\big) \big(\tfrac{1+v_{j}}{\sqrt{2}}\big) dv_{n} \cdots dv_{1} \\
 & \hspace*{-2mm} = \hspace*{-2mm} & \int_{v_{1} \in [0,1]} \cdots \int_{v_{n} \in [0,1]} \prod_{i=1}^{n} \big(\tfrac{1+v_{i}}{\sqrt{2}}\big)^{n-1} dv_{n} \cdots dv_{1} \\
 & \hspace*{-2mm} = \hspace*{-2mm} & \prod_{i=1}^{n} \int_{v_{i} \in [0,1]} \big(\tfrac{1+v_{i}}{\sqrt{2}}\big)^{n-1} dv_{i} \\
 & \hspace*{-2mm} = \hspace*{-2mm} & \Big(\tfrac{1}{n} 2^{\tfrac{1}{2}(n+1)} (1 - \tfrac{1}{2^{n}})\Big)^{n} \\
 & \hspace*{-2mm} = \hspace*{-2mm} & \tfrac{1}{n^{n}} 2^{\ell_{n+1}}(1-\tfrac{1}{2^{n}})^{n}.
\end{eqnarray*}
After a division by $ c^{n} $ to normalize, we find
\[\mathbb{E}(s_{n}^{*}) = \tfrac{1}{c_{n}} \mathbb{E}(s_{n}) \leq (1 + \tfrac{1}{n})^{n}(1-\tfrac{1}{2^{n}})^{n}.\]
\end{proof}
The upper bound on $ \mathbb{E}(s^{*}_{n}) $ in Theorem~\ref{thm:numberupper} is presented in the fourth column of Table~\ref{table.n=7}. It is less than or equal to $ e $ for all values of $ n $ and converges to $ e $ as $ n $ tends to infinity.
\subsection{Asymptotic Behavior of $ \mathbb{E}(s^{*}_{n}) $}
As we already remarked, by taking limits as $ n $ goes to infinity in the expressions of Theorems~\ref{thm:numberlower} and \ref{thm:numberupper}, the limit of the expectation of $ s_{n}^{*} $ lies between $ \sqrt{e} $ and $ e. $ We thus obtain Theorem~\ref{T4.1}.
\begin{thm} \label{T4.1}
For every $ n \in \mathbb{N}, $ consider the network model with random utilities $ (U_{n}, \mathbb{P}_{n}) $.
It holds that
\[
\sqrt{e} \leq \liminf_{n \to \infty} \mathbb{E}(s_{n}^{*}) \leq \limsup_{n \to \infty} \mathbb{E}(s_{n}^{*}) \leq e.
\]
\end{thm}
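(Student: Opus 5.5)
The plan is to sandwich $\mathbb{E}(s_{n}^{*})$ between the two explicit bounds already established and then pass to the limit in each. By Theorem~\ref{thm:numberlower} and Theorem~\ref{thm:numberupper}, for every $n \in \N$,
\[
a_{n} := e^{\tfrac{1}{2} \tfrac{(n-1)n}{(n+1)^{2}}} \leq \mathbb{E}(s_{n}^{*}) \leq \big(1+\tfrac{1}{n}\big)^{n} \big(1-\tfrac{1}{2^{n}}\big)^{n} =: b_{n}.
\]
Since $\liminf$ and $\limsup$ are monotone with respect to termwise inequalities, this gives $\liminf_{n} a_{n} \leq \liminf_{n} \mathbb{E}(s_{n}^{*})$ and $\limsup_{n} \mathbb{E}(s_{n}^{*}) \leq \limsup_{n} b_{n}$. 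The middle inequality $\liminf \leq \limsup$ holds trivially. It therefore suffices to show that $a_{n} \to \sqrt{e}$ and $b_{n} \to e$.

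For $a_{n}$, the exponent satisfies $\tfrac{(n-1)n}{(n+1)^{2}} \to 1$, since numerator and denominator are both $n^{2}$ plus lower-order terms. Continuity of the exponential then gives $a_{n} \to e^{1/2} = \sqrt{e}$.

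For $b_{n}$, the first factor $(1+\tfrac{1}{n})^{n} \to e$ is the standard limit. The second factor $(1-2^{-n})^{n}$ needs a little care, since it has the form $1^{\infty}$. By Bernoulli's inequality,
\[
1 \geq (1-2^{-n})^{n} \geq 1 - n2^{-n} \to 1,
\]
so this factor tends to $1$. Hence $b_{n} \to e$.

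The only step needing any thought is this limit of $(1-2^{-n})^{n}$. Even so, for the stated upper bound we do not need its exact limit: the bound $(1-2^{-n})^{n} \leq 1$ already gives $b_{n} \leq (1+\tfrac{1}{n})^{n} \to e$, and that suffices.
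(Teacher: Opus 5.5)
Your proposal is correct and is essentially the paper's argument: the paper likewise obtains the theorem just by letting $n \to \infty$ in the explicit bounds of Theorems~\ref{thm:numberlower} and~\ref{thm:numberupper}, which converge to $\sqrt{e}$ and $e$. Your added details fill in steps the paper leaves implicit: the monotonicity of $\liminf$ and $\limsup$ under termwise inequalities, and the treatment of the factor $(1-2^{-n})^{n}$ in the upper bound, via Bernoulli's inequality or simply $(1-2^{-n})^{n} \leq 1$.
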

The expected number of pairwise stable networks goes to infinity if $n$ goes to infinity, with speed comparable to that of the sequence $c_{n}$. On the other hand, the fraction of pairwise stable networks within the entire set of networks is of the order $ (\tfrac{2}{n+1})^{n} $ and thus approaches zero.
\section{The Variance of the Number of Pairwise Stable Networks}\label{sec.secondmoments}

In this section, we first derive the relationship between the Hamming distance between two networks $ g_{0} $ and $ g_{1} $ and the conditional dependence of the events $ S_{g_{0}} $ and $ S_{g_{1}}. $ We use this relationship to estimate the variance of the number of pairwise stable networks and show that it converges to zero if $ n $ tends to infinity.

\subsection{Hamming Distance and Dependencies}

The Hamming distance $ d $ on the set of networks $ G_{n} $ is given by
\[
d(g_{0},g_{1}) = \#((g_{0} \setminus g_{1}) \cup (g_{1} \setminus g_{0})), \quad g_{0}, g_{1} \in G_{n}.
\]
For $ r \in \mathbb{N}, $ let $B_{r}(g_{0}) = \{g_{1} \in G_{n}: d(g_{0},g_{1}) \leq r\}$ denote the ball of radius $ r $ around $ g_{0} \in G_{n}. $ A set $ N \subseteq G_{n}$ is called an $r$-net if for any two distinct $ g_{0}, g_{1} \in N $ it holds that $ d(g_{0},g_{1}) \geq r. $ These notions are helpful since the pairwise stability of a network $ g_{0} $ only involves networks which differ in a single link from $ g_{0}, $ i.e., networks in a $1$-ball around $g_0$. We can therefore describe the event $ S_{g_{0}} $ as in the next lemma.

\begin{lem} \label{L4.12}
Consider the network model with random utilities $ (U_{n}, \mathbb{P}_{n}) $.
The event that $ g_{0} \in G_{n} $ is pairwise stable is equal to
\[
S_{g_{0}} = \cap_{g_{1} \in B_{1}(g_{0})} D^{\rm c}(g_{0},g_{1}).
\]
\end{lem}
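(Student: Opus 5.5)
The plan is to unfold the definition of $S_{g_{0}}$ from Section~2 and match its index set against the networks in $B_{1}(g_{0})$. By definition,
\[
S_{g_{0}} = \bigcap_{ij \in g_{0}} D^{\rm c}(g_{0},g_{0}-ij) \bigcap_{ij \in L_{n} \setminus g_{0}} D^{\rm c}(g_{0},g_{0}+ij).
\]
So the first step is to describe the ball $B_{1}(g_{0})$ explicitly. A network $g_{1}$ satisfies $d(g_{0},g_{1}) \leq 1$ exactly when the symmetric difference $(g_{0}\setminus g_{1}) \cup (g_{1}\setminus g_{0})$ has at most one element. If it is empty, then $g_{1} = g_{0}$. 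If it is a single link $ij$, then either $ij \in g_{0}$ and $g_{1} = g_{0} - ij$, or $ij \in L_{n}\setminus g_{0}$ and $g_{1} = g_{0}+ij$. Hence
\[
B_{1}(g_{0}) = \{g_{0}\} \cup \{g_{0}-ij : ij \in g_{0}\} \cup \{g_{0}+ij : ij \in L_{n}\setminus g_{0}\}.
\]
These networks are pairwise distinct, so there is a bijection between $B_{1}(g_{0})\setminus\{g_{0}\}$ and $L_{n}$ given by $ij \mapsto g_{0} \pm ij$.

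The second step is to substitute this description into the intersection. For $g_{1} \neq g_{0}$ in the ball, the event $D(g_{0},g_{1})$ is exactly the domination event defined in Section~2, so the intersection over $B_{1}(g_{0})\setminus\{g_{0}\}$ coincides term by term with the definition of $S_{g_{0}}$.

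The only delicate point is the term $g_{1} = g_{0}$, for which $D(g_{0},g_{0})$ was never formally defined. I would adopt the natural convention that a network does not dominate itself. Under the strict inequalities used in the definitions of $D$, this gives $D(g_{0},g_{0}) = \emptyset$, so $D^{\rm c}(g_{0},g_{0}) = U_{n}$ and the term leaves the intersection unchanged. Alternatively, one can read the intersection as ranging over $g_{1} \neq g_{0}$. Either way the two sides are equal, and no probabilistic argument is needed.
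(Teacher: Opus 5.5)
Your proposal is correct and matches the paper's approach. The paper gives no separate proof and treats the lemma as immediate from the definitions of $S_{g_{0}}$ and $B_{1}(g_{0})$. Your identification of $B_{1}(g_{0})\setminus\{g_{0}\}$ with $\{g_{0}\pm ij : ij \in L_{n}\}$ is the intended verification, and so is your convention $D(g_{0},g_{0}) = \emptyset$ for the term $g_{1} = g_{0}$, which the paper leaves implicit.
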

It follows from Lemma~\ref{L4.12} that $ S_{g_{0}} $ is measurable with respect to the sigma-algebra generated by the stochastic variables $\{u(g_{1}): g_{1} \in B_{1}(g_{0})\}$.

We now delve deeper into the conditional dependence of the events $ S_{g_{0}} $ and $ S_{g_{1}}. $ In general, the event that $g_{0}$ is pairwise stable is not independent of the event that $g_{1}$ is pairwise stable. Lemma~\ref{L4.13} links the nature of this dependence to the Hamming distance between the two networks.
\begin{lem}\label{L4.13} Consider the network model with random utilities $ (U_{n}, \mathbb{P}_{n}) $. Let $ g_{0}, g_{1} \in G_{n}. $
\begin{itemize}
\item[(i)] If $ d(g_{0},g_{1}) = 1, $ then the event $ S_{g_{0}} \cap S_{g_{1}} $ has measure zero. Consequently, the set of pairwise stable networks is a 2-net almost surely.
\item[(ii)] If $d(g_{0},g_{1}) = 2$, then the events $S_{g_{0}}$ and $S_{g_{1}}$ are positively dependent. More generally, if $ N \subseteq G_{n} $ is a 2-net, the events in $ \{S_{g} : g \in N\} $ are positively dependent in the following sense: there exist stochastic variables $\nu_{1},\ldots,\nu_{k}$ on $U$ such that
\begin{itemize}
\item[(iia)] the events in $ \{S_{g}: g \in N\} $  are independent conditional on $\nu_{1},\ldots,\nu_{k}, $
\item[(iib)] for each $ g \in N, $ the conditional probability $\mathbb{P}_{n}(S_{g}|\nu_{1},\ldots,\nu_{k})$ is a non-decreasing function of each of the stochastic variables $\nu_{1},\ldots,\nu_{k}$.
\end{itemize}
\item[(iii)] If $ d(g_{0},g_{1}) \geq 3, $ then the events $S_{g_{0}}$ and $S_{g_{1}}$ are independent. More generally, if $ N \subseteq G_{n} $ is a 3-net, then the events $ \{S_{g}: g \in N\} $ are independent.
\end{itemize}
\end{lem}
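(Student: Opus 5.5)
Throughout, the plan relies on Lemma~\ref{L4.12}: $S_{g}$ is measurable with respect to $\{u(h): h \in B_{1}(g)\}$, and the coordinates $u_{i}(h)$ are independent across $i$ and $h$. Each part then reduces to understanding how the balls $B_{1}(g)$, $g \in N$, overlap.

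For (i), let $g_{1} = g_{0} + ij$ (the case $g_{1} = g_{0} - ij$ is symmetric). Stability of $g_{1}$ requires $u_{i}(g_{1}) \geq u_{i}(g_{0})$ and $u_{j}(g_{1}) \geq u_{j}(g_{0})$, since neither $i$ nor $j$ may gain from deleting $ij$. Stability of $g_{0}$ requires that for some $k \in \{i,j\}$ we have $u_{k}(g_{0}) \geq u_{k}(g_{1})$. Together these force $u_{k}(g_{0}) = u_{k}(g_{1})$ for some $k$, which is a null event because the variables are independent and uniform. Since $G_{n}$ is finite, almost surely no two stable networks are at distance $1$, so the set of pairwise stable networks is a $2$-net almost surely.

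For (iii), if $d(g,g') \geq 3$ then $B_{1}(g) \cap B_{1}(g') = \emptyset$ by the triangle inequality. For a $3$-net $N$ the balls $B_{1}(g)$, $g \in N$, are therefore pairwise disjoint. The families $\{u(h): h \in B_{1}(g)\}$ are then independent collections of independent variables, and the events $S_{g}$, being measurable with respect to these families, are independent.

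For (ii), let $N$ be a $2$-net. Two balls $B_{1}(g)$ and $B_{1}(g')$ with $g,g' \in N$ can intersect only when $d(g,g') = 2$. They cannot intersect at a centre: $g' \in B_{1}(g)$ would mean $d(g,g') \leq 1$. So every shared network $h$ is a distance-$1$ neighbour of each network in $N$ whose ball contains it, and is never itself in $N$. I take as conditioning variables $\nu_{1},\ldots,\nu_{k}$ the negatives $-u_{i}(h)$, for all $i \in I_{n}$ and all $h \notin N$ lying in at least two of the balls. Conditional on these, each $S_{g}$ depends only on $u(g)$ and on the utilities at neighbours not shared with any other ball. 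These are disjoint families of independent variables that are also independent of the $\nu$'s, which gives conditional independence (iia).

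For (iib), I use Lemma~\ref{lem:conditionalprobs}.4. Given the utilities at the neighbours $N(g)$, the probability $\mathbb{P}_{n}(S_{g} \mid u(N(g)))$ is non-increasing in the relevant neighbour utilities $u_{i}(g - ij)$, $u_{j}(g+ij)$. I would check that it is in fact non-increasing in every neighbour utility $u_{k}(h)$; this follows because each $D^{\rm c}(g,h)$ is a decreasing event in $u(h)$. Integrating out the non-shared neighbours and $u(g)$ preserves this monotonicity, so the conditional probability is non-decreasing in each $\nu = -u_{k}(h)$. Positive dependence of $S_{g_{0}}$ and $S_{g_{1}}$ at distance $2$ is then the two-element case. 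It follows from Harris/FKG (Chebyshev's sum inequality for product measures): two functions that are non-decreasing in independent coordinates are positively correlated, so $\mathbb{P}(S_{g_{0}}\cap S_{g_{1}}) = \mathbb{E}[\mathbb{P}(S_{g_{0}}|\nu)\mathbb{P}(S_{g_{1}}|\nu)] \geq \mathbb{P}(S_{g_{0}})\mathbb{P}(S_{g_{1}})$.

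The main obstacle is the bookkeeping in (ii). I must verify that a shared neighbour cannot be a member of $N$. I must also confirm the monotonicity direction for all coordinates of a neighbour, including those of agents outside the link, which are in fact irrelevant.
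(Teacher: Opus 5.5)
Your proposal is correct and follows essentially the paper's argument: (i) reduces to a tie $u_{k}(g_{0}) = u_{k}(g_{1})$, (iii) to disjointness of the balls $B_{1}(g)$, $g \in N$, and (ii) to conditioning on the negated utilities at networks outside $N$, using that $1_{S_{g}}$ is non-increasing in the utilities at the neighbours of $g$. The paper conditions on all of $\{u(h): h \in G_{n}\setminus N\}$, which makes (iia) immediate, whereas your restriction to shared neighbours only needs the extra bookkeeping you carry out. Your Harris--FKG step, which turns (iia)--(iib) into $\mathbb{P}_{n}(S_{g_{0}} \cap S_{g_{1}}) \geq \mathbb{P}_{n}(S_{g_{0}})\mathbb{P}_{n}(S_{g_{1}})$ for $d(g_{0},g_{1}) = 2$, is a valid addition that the paper leaves implicit.
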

\begin{proof}
(i) If $ d(g_{0},g_{1}) = 1, $ then the two events $ D(g_{1},g_{0}) $ and $ D^{\rm c}(g_{0},g_{1}) $ differ only when there are ties. Ties occur with probability $0$.\medskip

\noindent(ii) Let $ \nu_{1}, \ldots, \nu_{k} $ be the stochastic variables in the collection $ \{-u_{i}(g) : g \in G_{n} \setminus N, $ $ i \in I_{n}\}. $ Then condition (iia) follows from the fact that $ N $ is a 2-net and from Lemma~\ref{L4.12}. Condition (iib) holds since the indicator function $ 1_{S_{g}} : U \to [0,1]$ is non-decreasing in $ -u_{i}(g^{\prime})$, for each $ i \in I_{n} $ and each network $ g^{\prime} $ other than $ g. $ So in particular, it is non-decreasing in $\nu_{1},\ldots,\nu_{k}$.
\medskip

\noindent(iii) This follows from Lemma~\ref{L4.12} and the fact that $ B_{1}(g_{0}) $ and $ B_{1}(g_{1}) $ are disjoint.
\end{proof}
If the Hamming distance between $g_{0}$ and $g_{1}$ is exactly 1, then the events $S_{g_{0}}$ and $S_{g_{1}}$ are mutually exclusive: If $ g_{0} $ is pairwise stable, then it is not dominated by $g_{1}$; but this means, modulo indifferences, that $g_{0}$ dominates $g_{1}$, and so $g_{1}$ is not pairwise stable.

If the Hamming distance between $g_{0}$ and $g_{1}$ is 3 or more, then the events $S_{g_{0}}$ and $S_{g_{1}}$ are independent. To verify the pairwise stability of $ g_{0} $ does not involve any network that is needed for the verification of the pairwise stability of $g_{1}$.

When $g_{0}$ and $g_{1}$ are exactly at a Hamming distance of 2 from each other, the events $S_{g_{0}}$ and $S_{g_{1}}$ are positively correlated. Intuitively, knowing that $ g_{0} $ is pairwise stable reveals that other networks in its 1-ball are relatively ``poor". Two such networks are also in the 1-ball around $ g_{1}. $ This information, therefore, enhances the likelihood that $g_{1}$ dominates these two networks.

The next example builds additional intuition for the result in Lemma~\ref{L4.13} and illustrates this lemma for the case $ n = 3. $

\begin{exmp}\rm
Let $ n = 3. $ For any two distinct networks $ g_{0}, g_{1} \in G_{3}, $ we compute the ratios of probabilities
\[
\displaystyle\frac{\mathbb{P}_{3}(S_{g_{0}} \cap S_{g_{1}})}{\mathbb{P}_{3}(S_{g_{0}}) \mathbb{P}_{3}(S_{g_{1}})}.
\]
 The results are reported in Table~\ref{table.correlation}.
\begin{table}[t]
\begin{center}
\renewcommand{\arraystretch}{1.5}
\begin{tabular}{lccccccc}
\toprule
 & $\{12\}$ & $\{13\}$ & $\{23\}$ & $\{12,13\}$ & $\{12,23\}$ & $\{13,23\}$ & $\{12,13,23\}$ \\
\midrule

$\varnothing$
    & 0 & 0 & 0
    & 
    1.2488
    & 
    1.2488
    & 
    1.2488
    & 
    1 \\

$\{12\}$
    &
    & 
    1.3172
    & 
    1.3172
    & 0
    & 0
    & 
    1
    & 
    1.1053 \\

$\{13\}$
    &
    &
    & 
    1.3172
    & 0
    & 
    1
    & 0
    & 
    1.1053 \\

$\{23\}$
    &
    &
    &
    & 
    1
    & 0
    & 0
    & 
    1.1053 \\

$\{12,13\}$
    &
    &
    &
    &
    & 
    1.3125
    & 
    1.3125
    & 0 \\

$\{12,23\}$
    &
    &
    &
    &
    &
    & 
    1.3125
    & 0 \\

$\{13,23\}$
    &
    &
    &
    &
    &
    &
    & 0 \\

\bottomrule
\end{tabular}
\end{center}
\caption{The ratios $ \displaystyle\frac{\mathbb{P}_{3}(S_{g_{0}} \cap S_{g_{1}})}{\mathbb{P}_{3}(S_{g_{0}}) \mathbb{P}_{3}(S_{g_{1}})} $ for any two distinct networks $ g_{0}, g_{1} \in G_{3}. $}\label{table.correlation}
\end{table}
The ratio is 0 for networks which are at Hamming distance 1 from each other by Lemma~\ref{L4.13}.(i). The 1's in the matrix are implied by statistical independence which occurs at Hamming distance 3 by Lemma~\ref{L4.13}.(iii). Networks of Hamming distance~2 have a positive correlation by Lemma~\ref{L4.13}.(ii). Since we have computed the denominators of the expression in Table~\ref{table.correlation} in Section 3, it remains to compute the numerators, where we illustrate one sample computation of the joint event that two networks at Hamming distance~2 are pairwise stable.

Consider the networks $g_{0} = \{12\}$ and $g_{1} = \{12,13,23\}$. Following an approach similar to that in Section~\ref{sec.structure}, we first calculate the conditional probability for $g_{0}$ and $g_{1}$ to be both pairwise stable, given their utilities $ u(g_{0}), u(g_{1}) \in \mathbb{R}^{I_{n}}. $ To simplify notation, we write $ v^{0}_i $ for $ u_{i}(g_{0}) $ and $ v^{1}_i $ for $ u_{i}(g_{1}). $

The event that both $g_{0}$ and $g_{1}$ are simultaneously pairwise stable occurs if there is no profitable deviation to a network within a 1-ball around either $ \{12\} $ or $ \{12,13,23\}. $ Each of these two 1-balls has cardinality $4$, including the network itself, leaving us with $2 \cdot 3 = 6 $ possible deviations. They have two networks in common, namely the networks $\{12, 13\}$ and $\{12, 23\}$. Taking this into account, we consider the four events listed in Table~\ref{T4}.
\begin{table}[t]
\begin{center}
\renewcommand{\arraystretch}{1.25}
\begin{tabular}{llc}
\toprule
 & No deviation & Conditional probability given $ v^{0} $ and $ v^{1} $\\
\midrule
1 & from $\{12\}$ to $\emptyset$ & $ v^{0}_{1} v^{0}_{2} $\\
2 & from $\{12\}$ or $\{12, 13, 23\}$ to $\{12, 13\}$ & $ v^{0}_{1} v^{1}_{2} v^{1}_{3} + (1-v^{0}_{1}) v^{1}_{2} \min\{v^{0}_{3}, v^{1}_{3}\} $\\
3 & from $\{12\}$ or $\{12, 13, 23\}$ to $\{12, 23\}$ & $ v^{0}_{2} v^{1}_{1} v^{1}_{3} + (1-v^{0}_{2}) v^{1}_{1} \min\{v^{0}_{3}, v^{1}_{3}\} $\\
4 & from $\{12, 13, 23\}$ to $\{13,23\}$ & $ v^{1}_{1} v^{1}_{2}$\\
\bottomrule
\end{tabular}
\end{center}
\caption{Conditional probabilities of absence of particular deviations.} \label{T4}
\end{table}
The four events are conditionally independent given $ v^{0} $ and $ v^{1}. $ Consequently, to compute the probability of the event $S_{g_{0}} \cap S_{g_{1}}$ we multiply the four entries of the table and take the integral of their product with respect to $ v^{0}_{1}, v^{0}_{2}, v^{0}_{3}, v^{1}_{1}, v^{1}_{2}, $ and $ v^{1}_{3}. $

The other computations for networks at Hamming distance~2 are similar, leading to the results reported in Table~\ref{table.correlation}.
\hspace*{\fill} $ \triangle $
\end{exmp}
\subsection{Convergence of the Variance}
In case $ n=3, $ there are few pairs of networks for which the probability that both are pairwise stable is independent. Yet, if $ n $ tends to infinity, the fraction of such pairs of networks goes to one. This vanishing dependence enables us to show that in the limit the variance of the normalized number of pairwise stable networks goes to zero.


\begin{thm}\label{thm.variance}
Consider the network model with random utilities $ (U_{n}, \mathbb{P}_{n}) $. The variance $ \sigma^{2}(s_{n}) $ of the number and $ \sigma^{2}(s^{*}_{n}) $ of the normalized number of pairwise stable networks satisfies
\[
\begin{array}{l}
\sigma^{2}(s_{n}) \leq \ell_{n}^{2} \mathbb{E}(s_{n}),\\
\sigma^{2}(s_{n}^{*}) \leq \tfrac{\ell_{n}^{2}}{c_{n}} \mathbb{E}(s_{n}^{*}).
\end{array}
\]
In particular, it holds that
\[
\lim_{n \to \infty} \sigma^{2}(s_{n}^{*}) = 0.
\]
\end{thm}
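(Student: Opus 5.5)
We expand the variance as a double sum over pairs of networks and classify the pairs by Hamming distance, using Lemma~\ref{L4.13}. Write
\[
\sigma^{2}(s_{n}) = \sum_{g_{0} \in G_{n}} \sum_{g_{1} \in G_{n}} \big[\mathbb{P}_{n}(S_{g_{0}} \cap S_{g_{1}}) - \mathbb{P}_{n}(S_{g_{0}})\mathbb{P}_{n}(S_{g_{1}})\big].
\]
The summands behave as follows.
\begin{itemize}
\item When $d(g_{0},g_{1}) \geq 3$, the summand vanishes by Lemma~\ref{L4.13}(iii).
\item When $d(g_{0},g_{1}) = 1$, the joint probability is zero by Lemma~\ref{L4.13}(i), so the summand is nonpositive and may be dropped for an upper bound.
\item When $g_{0} = g_{1}$, the summand is at most $\mathbb{P}_{n}(S_{g_{0}})$.
\item When $d(g_{0},g_{1}) = 2$, we bound the summand crudely by $\mathbb{P}_{n}(S_{g_{0}} \cap S_{g_{1}}) \leq \mathbb{P}_{n}(S_{g_{0}})$, discarding the subtracted product.
\end{itemize}
The positive dependence in Lemma~\ref{L4.13}(ii) is not needed for an upper bound; only the distance-$\geq 3$ independence matters.

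For fixed $g_{0}$, the number of $g_{1}$ at distance exactly $2$ is $\binom{\ell_{n}}{2}$. Combining the bounds above gives
\[
\sigma^{2}(s_{n}) \leq \Big(1 + \tbinom{\ell_{n}}{2}\Big) \sum_{g_{0}} \mathbb{P}_{n}(S_{g_{0}}) = \Big(1 + \tbinom{\ell_{n}}{2}\Big)\mathbb{E}(s_{n}).
\]
Since $1 + \ell_{n}(\ell_{n}-1)/2 \leq \ell_{n}^{2}$ for $\ell_{n} \geq 1$, this yields $\sigma^{2}(s_{n}) \leq \ell_{n}^{2}\mathbb{E}(s_{n})$. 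For $n=1$, $\ell_{n}=0$, there is a single network, which is trivially stable, so the variance is $0$ and the bound holds there as well.

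For the normalized variable, $\sigma^{2}(s_{n}^{*}) = \sigma^{2}(s_{n})/c_{n}^{2} \leq \ell_{n}^{2}\mathbb{E}(s_{n})/c_{n}^{2} = (\ell_{n}^{2}/c_{n})\,\mathbb{E}(s_{n}^{*})$, since $\mathbb{E}(s_{n}^{*}) = \mathbb{E}(s_{n})/c_{n}$. For the limit, Theorem~\ref{thm:numberupper} gives $\mathbb{E}(s_{n}^{*}) \leq e$, so it suffices to show $\ell_{n}^{2}/c_{n} \to 0$. Here $\ell_{n}^{2} \leq n^{4}/4$ while $c_{n} = (2^{(n+1)/2}/(n+1))^{n}$. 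Taking logarithms, $\ln c_{n} = n\big(\tfrac{n+1}{2}\ln 2 - \ln(n+1)\big)$, which grows like $n^{2}$ and so dominates $4\ln n$. Hence $\ell_{n}^{2}/c_{n} \to 0$ and the variance of $s_{n}^{*}$ tends to zero.

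The main step requiring care is the bookkeeping of the distance-$1$ and distance-$2$ pairs: we must make sure the negative covariances at distance $1$ are only dropped and never needed, and that the count of neighbours at distance $2$ stays within $\ell_{n}^{2}$. Everything else follows directly from Lemma~\ref{L4.13} and Theorem~\ref{thm:numberupper}.
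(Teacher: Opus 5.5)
Your proof is correct and follows essentially the paper's argument: you split pairs by Hamming distance, use independence at distance $\geq 3$ (Lemma~\ref{L4.13}(iii)), bound the joint probability by $\mathbb{P}_{n}(S_{g_{0}})$ inside the $2$-ball, count the neighbours, and conclude via $\mathbb{E}(s_{n}^{*})\leq e$ and $\ell_{n}^{2}/c_{n}\to 0$. Your one refinement is dropping the distance-$1$ pairs via Lemma~\ref{L4.13}(i), which gives the count $1+\binom{\ell_{n}}{2}\leq \ell_{n}^{2}$ for every $\ell_{n}\geq 1$; the paper's count $1+\ell_{n}+\binom{\ell_{n}}{2}\leq \ell_{n}^{2}$ only holds for $\ell_{n}\geq 2$, so your version also covers $n=2$ cleanly (and you treat $n=1$ separately).
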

\begin{proof}
The second moment of the number of pairwise stable networks is equal to
\begin{eqnarray*}
\mathbb{E}(s_{n}^{2}) & \hspace*{-2mm} = \hspace*{-2mm} & \mathbb{E}\left(\sum_{g_{0},g_{1} \in G_{n}}1_{S_{g_{0}} \cap S_{g_{1}}}\right)\\
 & \hspace*{-2mm} = \hspace*{-2mm} & \sum_{\substack{g_{0},g_{1} \in G_{n}\\ d(g_{0},g_{1}) \leq 2}} \mathbb{P}_{n}(S_{g_{0}} \cap S_{g_{1}}) + \sum_{\substack{g_{0},g_{1} \in G_{n}\\ d(g_{0},g_{1}) \geq 3}} \mathbb{P}_{n}(S_{g_{0}}) \mathbb{P}_{n}(S_{g_{1}})\\
 & \hspace*{-2mm} \leq \hspace*{-2mm} & \sum_{\substack{g_{0},g_{1} \in G_{n}\\ d(g_{0},g_{1}) \leq 2}} \mathbb{P}_{n}(S_{g_{0}}) + (\mathbb{E}(s_{n}))^{2}\\
 & \hspace*{-2mm} \leq \hspace*{-2mm} & \ell_{n}^{2} \sum_{g_{0} \in G_{n}} \mathbb{P}_{n}(S_{g_{0}}) + (\mathbb{E}(s_{n}))^{2} \\
 & \hspace*{-2mm} = \hspace*{-2mm} & \ell_{n}^{2} \mathbb{E}(s_{n}) + (\mathbb{E}(s_{n}))^{2},
\end{eqnarray*}
where the first equality follows directly from the definition of $ s_{n}^{2}, $ the second equality uses Lemma~\ref{L4.13}.(iii), the inequality in the third line uses $ \mathbb{P}_{n}(S_{g_{0}} \cap S_{g_{1}}) \leq \mathbb{P}_{n}(S_{g_{0}}) $ and
\[
\sum_{\substack{g_{0},g_{1} \in G_{n}\\ d(g_{0},g_{1}) \geq 3}} \mathbb{P}_{n}(S_{g_{0}}) \mathbb{P}_{n}(S_{g_{1}}) \leq \sum_{g_{0},g_{1} \in G_{n}} \mathbb{P}_{n}(S_{g_{0}}) \mathbb{P}_{n}(S_{g_{1}}) = (\mathbb{E}(s_{n}))^{2},
\]
and the inequality in the fourth line follows since the cardinality of the $2$-ball around any $g_{0} \in G_{n}$ is equal to $ 1 + \ell_{n} + \ell_{n}(\ell_{n}-1)/2, $ so at most $\ell_{n}^{2}. $
We find that
\[
\sigma^{2}(s_{n}) = \mathbb{E}(s_{n}^{2}) - (\mathbb{E}(s_{n}))^{2} \leq \ell_{n}^{2} \mathbb{E}(s_{n}).
\]
Next, we have that
\[
\begin{array}{rcl}
\sigma^{2}(s_{n}^{*}) & = & \sigma^{2}(\frac{1}{c_{n}} s_{n}) = \frac{1}{c^{2}_{n}} \sigma^{2}(s_{n}) \leq \frac{\ell^{2}_{n}}{c^{2}_{n}} \mathbb{E}(s_{n}) = \frac{\ell^{2}_{n}}{c_{n}} \mathbb{E}(s_{n}^{*}).
\end{array}
\]
Finally, it holds that
\[
\begin{array}{rcl}
\lim_{n \rightarrow \infty} \frac{\ell^{2}_{n}}{c_{n}} & = & \lim_{n \rightarrow \infty} (\frac{n(n-1)}{2})^{2} \left(\frac{n+1}{2^{\frac{n+1}{2}}}\right)^{n} = 0.
\end{array}
\]
By Theorem~\ref{T4.1}, we have $ \limsup_{n \to \infty} \mathbb{E}(s_{n}^{*}) \leq e. $ We conclude that $ \lim_{n \to \infty} \sigma^{2}(s_{n}^{*}) = 0. $
\end{proof}

\subsection{Two Implications for Almost-sure Convergence}
We conclude our analysis of the second moments with two straightforward consequences. The result below amounts to saying that, with probability one, the number of pairwise stable networks converges to infinity while the share of pairwise stable networks converges to zero, as $n$ goes to infinity. These results follow directly from Theorem \ref{thm.variance} by applying Chebyshev's inequality.

A technical detail is in order. We have introduced the random variable $s_{n}$ defined, for a given $n \in \N$, on the probability space $(U_{n},\pr_{n})$. To speak of ``convergence'' of random variabes, we must couple them, i.e., define the copies of these random variables on a common probability space. 

To do so, interpret $\N= \{1,2,\ldots\}$ as a universal population of individuals.  Let $L$ be the set of links between the individuals in $\N$. A network for the universal population is any finite subset of $L$. Let $G$ denote the set of networks in the universal population. A set of links $L_{n}$ in the population $I_{n}$ is then a subset of $L$, and, likewise, a set of networks $G_{n}$ in the population $n$ is a subset of $G$.  A universal network model is a function $u : \N \times G \to [0,1]$. Note that a network model of size $n$ is the restriction of $u$ to the subset of $G_{n} \times I_{n}$ of $G \times \N$. Let $U = [0,1]^{\N\times G}$ be the set of universal network models.

Noting that $G$ is a countable set, we endow $U$ with the product $\pr$ of uniform distributions on $[0,1]$ and refer to the probability space $(U,\pr)$ as the \textit{universal network model with random utilities}. The probability space  $(U,\pr)$ carries the copies of the random variables $s_{n}$ for all $n \in \N$ as well as all the other random variables we have considered. We continue to use the same symbol to denote the copies.



\begin{cor}\label{cor.as}
Consider the universal network model with random utilities  $(U,\pr)$. With probability 1, the number of pairwise stable networks $s_{n}$ converges to infinity, while the share of pairwise stable networks $2^{-\ell_{n}}s_{n}$ converges to zero as $n$ goes to infinity.
\end{cor}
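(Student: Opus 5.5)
The plan is to obtain almost-sure statements from the second-moment bound in Theorem~\ref{thm.variance} by Chebyshev's inequality and the Borel--Cantelli lemma. On the universal probability space $(U,\pr)$ the copy of $s_n$ has the same law as on $(U_n,\pr_n)$. Hence Theorems~\ref{thm:numberlower}, \ref{thm:numberupper} and \ref{thm.variance} all apply to it.

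\textbf{Step 1: a summable deviation bound.} We have $\sigma^2(s_n)\le \ell_n^2\mathbb{E}(s_n)$. By Theorem~\ref{thm:numberlower}, $\mathbb{E}(s_n)=c_n\mathbb{E}(s_n^*)\ge c_n$. Chebyshev's inequality then gives
\[
\pr\big(|s_n-\mathbb{E}(s_n)|\ge \tfrac12\mathbb{E}(s_n)\big)\le \frac{4\sigma^2(s_n)}{(\mathbb{E}(s_n))^2}\le \frac{4\ell_n^2}{\mathbb{E}(s_n)}\le \frac{4\ell_n^2}{c_n}.
\]
We have $c_n=\big(2^{(n+1)/2}/(n+1)\big)^n$, so $\ell_n^2/c_n\le n^4\big((n+1)2^{-(n+1)/2}\big)^n$. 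For large $n$ the base is below $1/2$, so this is at most $n^4 2^{-n}$ eventually, and the series $\sum_n \ell_n^2/c_n$ converges. By Borel--Cantelli, almost surely
\[
\tfrac12\mathbb{E}(s_n)\le s_n\le \tfrac32\mathbb{E}(s_n)
\]
for all but finitely many $n$.

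\textbf{Step 2: conclude both limits.} On this event $s_n\ge \tfrac12 c_n$ eventually, and $c_n\to\infty$ because $2^{(n+1)/2}/(n+1)\to\infty$. Hence $s_n\to\infty$ almost surely.

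For the share, Theorem~\ref{thm:numberupper} gives $\mathbb{E}(s_n)\le e\,c_n$. So on the same event, eventually
\[
2^{-\ell_n}s_n\le \tfrac32 e\,c_n2^{-\ell_n}=\tfrac32 e\,\big(\tfrac{2}{n+1}\big)^n\to 0.
\]

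\textbf{Main obstacle.} The only step that needs care is the summability in Step 1, since Chebyshev alone gives convergence in probability, not almost surely. Summability holds because $c_n$ grows like $2^{n^2/2}$ up to lower-order factors, so the bound $\ell_n^2/c_n$ decays much faster than needed. No independence across different $n$ is required, since the first Borel--Cantelli lemma does not use it.
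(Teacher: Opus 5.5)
Your proof is correct and follows the paper's route, which derives the corollary from the variance bound of Theorem~\ref{thm.variance} via Chebyshev's inequality. You also make explicit the Borel--Cantelli step, based on summability of $\ell_n^2/c_n$; the paper's one-line justification leaves this implicit, but it is genuinely needed to upgrade convergence in probability to almost-sure convergence under the coupling.
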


The result is stated for the universal network model merely for concreteness. Theorem \ref{thm.variance} remains valid for any coupling of $s_{n}$.

\section{Discussion}
This paper provides a first step in analyzing random networks. Within our framework, an open question remains whether the normalized expected number of pairwise stable networks converges, and if so, to compute the limit. By our results on the variance, for convergence in $L_2$ to a constant, it suffices to show monotonicity of the sequence $ \mathbb{E}(s_{n}^{*}) $.

There are many other directions for future research. Following the approach of Rinott and Scarsini (2000), one could introduce dependence in the payoffs of the players at a given network. Other cases with such dependencies would arise if one considers typical network models like the co-author model or the connections model of Jackson and Wolinsky (1996) and assume that each node in the network has a randomly drawn value. Numerical work on the connections model with heterogeneous valuations in Herings and Zhan (2024) provides some evidence that there are many pairwise stable networks in this case, too.

One could also consider alternative solution concepts in networks. One possibility would be to consider the case of weighted pairwise stable networks as in Bich and Morhaim (2020). Since they cover pairwise stable networks as a special case, the expected number of weighted pairwise stable networks exceeds that of pairwise stable networks. Existence of weighted pairwise stable networks is not an issue. 

Under pairwise stability, a player considers the deletion of a single link and a pair of players the addition of a single link. Strongly stable networks as defined in Jackson and van den Nouweland (2005) is a much stronger concept that allows arbitrary coalitions to deviate without restrictions on the number of links that are modified. The probability that a strongly stable network exists and the expected number of such networks are therefore natural questions to consider for the network model with random utilities.

\section*{References}
\begin{description}
\item[\sc Amiet, B., Collevecchio, A., Scarsini, M., and Z. Zhong (2021),] ``Pure Nash Equilibria and Best-response Dynamics in Random Games," \it Mathematics of Operations Research, 46, \rm 1552--1572.
\item[\sc Ashlagi, I., Y. Kanoria, and J.D. Leshno (2017),] ``Unbalanced Random Matching Markets: The Stark Effect of Competition," \it Journal of Political Economy, 125, \rm 69--98.
\item[\sc B\a'{a}r\a'{a}ny, I., S. Vempala, and A. Vetta (2007),] ``Nash Equilibria in Random Games," \it Random Structures and Algorithms, 31, \rm 391--405.
\item[\sc Bich, P., and L. Morhaim (2020),] ``On the Existence of Pairwise Stable Weighted Networks," \it Mathematics of Operations Research, 45, \rm 1393--1404.
\item[\sc Dresher, M. (1970),] ``Probability of a Pure Equilibrium Point in $n$-Person Games," \it Journal of Combinatorial Theory, 8, \rm 134--145.
\item[\sc Gale, D., and L.S. Shapley (1962),] ``College Admissions and the Stability of Marriage," \it American Mathematical Monthly, 69, \rm 9--15.
\item[\sc Goldberg, K., A.J. Goldman, and M. Newman (1968),] ``The Probability of an Equilibrium Point," \it Journal of Research of the National Bureau of Standards - B. Mathematical Sciences, 72B, \rm 93-–101.
\item[\sc Goldman, A.J. (1957),] ``The Probability of a Saddlepoint," \it American Mathematical Monthly, 64, \rm 729--730.
\item[\sc Hellmann, T. (2013),] ``On the Existence and Uniqueness of Pairwise Stable Networks," \it International Journal of Game Theory, 42, \rm 211--237.
\item[\sc Herings, P.J.J., and Y. Zhan (2024),] ``The Computation of Pairwise Stable Networks," \it Mathematical Programming, 203, \rm 443--473.
\item[\sc Jackson, M.O., and A. van den Nouweland (2005),] ``Strongly Stable Networks," \it Games and Economic Behavior, 51, \rm 420--444.
\item[\sc Jackson, M.O., and A. Watts (2001),] ``The Existence of Pairwise Stable Networks," \it Seoul Journal of Economics, 14, \rm 299--321.
\item[\sc Jackson, M.O., and A. Wolinsky (1996),] ``A Strategic Model of Social and Economic Networks," \it Journal of Economic Theory, 71, \rm 44--74.
\item[\sc Knuth, D.E. (1976),] \it Marriages Stables, \rm Les Presses de l'Universit\a'{e} de Montreal, Montreal.
\item[\sc McLennan, A. (2005),] ``The Expected Number of Nash Equilibria of a Normal Form Game," \it Econometrica, 73, \rm 141--174.
\item[\sc McLennan, A., and J. Berg (2005),] ``Asymptotic Expected Number of Nash Equilibria of Two-player Normal Form Games," \it Games and Economic Behavior, 51, \rm 264--295.
\item[\sc Pittel, B. (1989),] ``The Average Number of Stable Matchings," \it SIAM Journal on Discrete Mathematics, 2, \rm 530--549.
\item[\sc Rinott, Y., and M. Scarsini (2000),] ``On the Number of Pure Strategy Nash Equilibria in Random Games," \it Games and Economic Behavior, 33, \rm 274--293.
\end{description}

\end{document}